\newtheorem{theorem}{Theorem} 
\newtheorem{lemma}[theorem]{Lemma}
\newtheorem{definition}[theorem]{Definition}
\newtheorem{remark}{Remark}
\newcommand{\ket}[1]{\ensuremath{\vert#1\rangle}}
\newcommand{\tr}{\ensuremath{\mathrm{tr}}}
\newcommand{\Real}{\ensuremath{\mathrm{Re}}}
\newcommand{\Imag}{\ensuremath{\mathrm{Im}}}
\def\>{\rangle} 
\def\<{\langle}
\begin{document}

\title{Trade-offs on number and phase shift resilience in bosonic quantum codes}





\author{ 
    \IEEEauthorblockN{Yingkai Ouyang\IEEEauthorrefmark{1}\IEEEauthorrefmark{2}, Earl T. Campbell\IEEEauthorrefmark{1}\IEEEauthorrefmark{3}}\\
    \IEEEauthorblockA{\IEEEauthorrefmark{1}Department of Physics \& Astronomy, University of Sheffield, Sheffield, S3 7RH, United Kingdom
    }\\
    \IEEEauthorblockA{\IEEEauthorrefmark{2}Department of Electrical and Computer Engineering, National University of Singapore, Singapore
    }\\
    \IEEEauthorblockA{\IEEEauthorrefmark{3}AWS Center for Quantum Computing, Pasadena, CA 91125 USA
    }
}

\date{\today}

\maketitle

\begin{abstract}
Quantum codes typically rely on large numbers of degrees of freedom to achieve low error rates. However each additional degree of freedom introduces a new set of error mechanisms. Hence minimizing the degrees of freedom that a quantum code utilizes is helpful.
One quantum error correction solution is to encode quantum information into one or more bosonic modes. We revisit rotation-invariant bosonic codes, which are supported on Fock states that are gapped by an integer $g$ apart, and the gap $g$ imparts number shift resilience to these codes. Intuitively, since phase operators and number shift operators do not commute, one expects a trade-off between resilience to number-shift and rotation errors. Here, we obtain results pertaining to the non-existence of approximate quantum error correcting $g$-gapped single-mode bosonic codes with respect to Gaussian dephasing errors. We show that by using arbitrarily many modes, $g$-gapped multi-mode codes can yield good approximate quantum error correction codes for any finite magnitude of Gaussian dephasing and amplitude damping errors. 
\end{abstract}

 \section{Introduction}

Traditionally, quantum error correction is studied on physical systems comprising of multiple particles, where each particle has a finite number of states and admits an interpretation as a qudit.
The subspace within multiple qudits which encodes quantum information to gain resilience against noise is known as a qudit code.
A resource required for qudit codes is the number of their underlying qudits.
This is because by increasing the number of qudits, one can correspondingly increase the number of correctible errors.
 
Since every qudit has an associated cost, it is advantageous to minimize the number of particles used by a quantum code.
However, qudit codes often use many particles.
One way to circumvent the high cost of using too many qudits is to 
encode quantum information in a few bosonic modes \cite{CLY97,WaB07,BvL16,ouyang2019permutation},
where each bosonic mode has a large number of equally-spaced energy levels.
A state on a bosonic mode lies in the span of the Fock basis
 $\{|n\>:n \in \mathbb N\}$, 
 where $\<n|m\> = \delta_{i,j}$ and $n$ in $|n\>$ counts the number of excitations.
Hence, the larger $n$ is, the more energy lies in $|n\>$.
A large number of states available in each bosonic mode allow the construction of bosonic quantum codes with resilience against various types of errors. 
Even with a single bosonic mode, 
we can have corresponding bosonic quantum codes that correct a non-trivial set of errors \cite{GKP01,BinomialCodes2016,GCB20-PhysRevX.10.011058}.
This feature combined with the possibility of preparing single-mode bosonic codes \cite{hu2019quantum} have contributed to renewed interest in bosonic quantum codes \cite{BinomialCodes2016,GCB20-PhysRevX.10.011058,terhal2020towards,hanggli2020enhanced}.

On a single bosonic mode, we can always express logical codewords in the form $\ket{j_L}:=\sum_{k \ge 0 } c_{j,k} |n+g k\>$, where the integer $g$ relates to the number of correctible number-shift errors and $n$ is any constant shift. This is fully general because the trivial $g=1$ case allows for arbitrary logical codewords.  However, many families of interest such as rotation-invariant bosonic codes~\cite{GCB20-PhysRevX.10.011058} and binomial codes~\cite{BinomialCodes2016},  have encoded states with support on Fock states that are gapped by a constant integer $g>1$ apart.  We call such codes, $g$-gapped codes. On $N$ modes, such states have the form
\begin{align}
 \ket{j_L} := \sum_{{\bf k} \in \mathbb N^N}
     c_{j,{\bf k}} 
     |{\bf n}+ g {\bf k}\>,\label{def:g-gapped-codewords}
\end{align}
where ${\bf n} \in \mathbb N^N$ is any constant shift on multiple modes.
By carefully choosing the coefficients for the logical codewords, these $g$-gapped codes can also correct phase errors. 

By correcting number-shift and phase-shift errors, rotation-invariant codes are analogous to GKP codes \cite{GKP01} that correct small displacement errors in the position and momentum quadratures. For single-mode GKP code, there is a trade-off between these quadratures.  In principle, GKP codes can tolerate an arbitrary amount of position displacement noise, though at the cost of an increased vulnerability to momentum displacement noise. We are not aware of any no-go theorems enforcing such a trade-off for single-mode bosonic codes, though it is widely believed that there is a general principle at work here.  Similarly, Grimsmo \textit{et al.} \cite{GCB20-PhysRevX.10.011058} have argued that there is a trade-off between number and phase shift, but without any strict no-go statements.

Given any bosonic quantum codes on a single-mode, one can evaluate its performance with respect to a noisy quantum channel $\mathcal N$. 
Since logical error rates cannot be perfectly suppressed for realistic error channels $\mathcal N$, such codes are invariably approximate quantum error correcting (AQEC) codes with respect to $\mathcal N$.
In what follows we define AQEC codes \cite{LNCY97,BaK02,Kle07,Fletcher08,CeO10,Tys10,ouyang2014permutation,kubica2020using,zhou2020new}.
\begin{definition}[AQEC criterion] \label{AQEC}
Given a non-negative number $\epsilon$ and a noise channel $\mathcal N$, 
we say that a quantum code $\mathcal C$ is $(\epsilon,\mathcal N)$-AQEC if and only if there exists a quantum channel $\mathcal R$ such that for every $|\psi\> \in \mathcal C$, we have
\begin{align}
\frac 1 2 \| (\mathcal R ( \mathcal N(|\psi\>\<\psi| )) - |\psi\>\<\psi|  \|_1 \le \epsilon.\label{AQEC:first}
\end{align}

\end{definition}

Numerous works have focused on the existence of AQEC codes  \cite{LNCY97,BaK02,Kle07,Fletcher08,CeO10,Tys10,ouyang2014permutation}
and non-existence of covariant AQEC codes \cite{kubica2020using,zhou2020new}, but less is known about non-existence of AQEC codes in general. The first general non-existence of AQEC qubit codes was recently addressed in the context of amplitude damping errors. Roughly speaking, using the language of quantum weight enumerators and linear programming bounds, when $\epsilon$ is too small, amplitude damping qubit codes do not exist \cite{ouyang2020linear}.  In contrast, we lack results for the non-existence of AQEC codes on any finite number of bosonic modes.

Intuitively, we expect that when the noise described by $\mathcal N$ becomes too severe, there cannot exist $(\epsilon,\mathcal N)$-AQEC bosonic codes when $\epsilon$ is sufficiently small.  An example of a noise channel that severely decoheres a bosonic mode is one that introduces random displacements with large variances in both the position and momentum quadratures.
Indeed, any single-mode bosonic code with too little energy is doomed to be useless under the effect of such a noise channel, because such large random displacement errors will effectively apply a one-time pad on the code \cite{Ouyang_2020_PRR}. 
However, number and phase shift errors with respect to AQEC codes remain to be studied.

The trade-off between resilience to number-shift and rotation errors, while recognized to be an important problem, is not well-understood \cite{GCB20-PhysRevX.10.011058}. 
In this paper, we address this by studying the performance of $g$-gapped codes with respect to a Gaussian dephasing channel $\mathcal E_\sigma$ with standard deviation $\sigma$ with the following action
 \begin{align} \label{DephaseModel}
\mathcal E_\sigma ( |n\>\<m|) &= \exp(-(m-n)^2 \sigma^2/2) |n\>\<m|,
\end{align}
as was also considered in Ref.~\cite{GCB20-PhysRevX.10.011058}.
The noise channel $\mathcal E_\sigma$ could arise as the result of a time $t$ evolving according to the master equation
\begin{equation}
    \dot{\rho} = \kappa \left( \hat{n} \rho  \hat{n} - \frac{1}{2}  \hat{n}^2 \rho - \frac{1}{2} \rho \hat{n}^2     \right)
\end{equation}
where $\sigma = \kappa t /4$ and $\hat n = \sum_{n \ge 0} n |n\>\<n|$ denotes the number operator. 
This dissipative process could be the result of rapid oscillations in the frequency of the bosonic system. For optical cavities, such rapid oscillations are regarded as the second most common noise process after photon loss~\cite{albert2018performance}.  The master equation could also arise from a weak interaction of the form $\hat{n}\otimes B$ (e.g. a cross-Kerr nonlinearity) where $B$ is some operator acting on an environment that is evolving on a much faster time scale, so that the Markov approximation holds. This dissipative process also appears as the $\kappa_\phi$ process in Ref.~\cite{chamberland2020building}.


Alternatively, our dephasing process could be the result of imperfect calibration of the system Hamiltonian, where if we average over a distribution of calibration errors we have a channel
\begin{align}
\mathcal E_\sigma(\rho)
=
\int_{-\infty}^\infty
p(\theta) e^{-i\theta \hat n} \rho e^{i\theta \hat n}
d\theta
\label{def:Gaussian-dephasing},
\end{align}
with a Gaussian probability distribution $p(\theta) =e^{-\theta^2 / (2\sigma^2)} /( \sigma \sqrt {2 \pi} )$.

We give our main result on the non-existence of AQEC $g$-gapped bosonic codes in the following theorem. This result applies not just to $g$-gapped bosonic codes on a single bosonic mode, but also to those on multiple bosonic modes.
\begin{theorem} \label{MainThm-v2} 
Let $\mathcal C_g$ be any \smash{$N$-mode} bosonic quantum code with codewords of the form \eqref{def:g-gapped-codewords} where $g$ is a positive integer. 
Let $\mathcal{E}_{\sigma}$ be a Gaussian dephasing channel with variance $\sigma$ as defined in Eq.~\eqref{DephaseModel}, and let $\mathcal N = \mathcal E_\sigma^{\otimes N}$.  
Then, the code $\mathcal C_g$ is not 
$(\epsilon, \mathcal N)$-AQEC (recall Def.~\ref{AQEC}) for any 
$\epsilon < \epsilon_{g,\sigma}$ where
\begin{equation} 
\label{EpsilonGsigma}
    \epsilon_{g,\sigma} := 1- \frac{1}{\sqrt{2}} \left(
    1+2 
    \sum_{
        \substack{
            {\bf k}\in \mathbb N ^N\\
            {\bf k} \neq 0\\
        }
    } 
    e^{- g^2 \|{\bf k}\|_2^2 \sigma^2 / 2}
    \right) .
\end{equation}
Furthermore,  if
\begin{equation} \label{tradeoff}
        g \sigma \ge 
\sqrt{-2 \log\left(  1-2^{3N/2}(2+\sqrt 2)^{-1/N} \right) },
\end{equation}
we have the non-trivial bound $\epsilon_{g,\sigma}>0$.
\end{theorem}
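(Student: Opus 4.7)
My plan is a contradiction argument that combines the AQEC definition with data processing and exploits the structural properties of $\mathcal{N}$ on $g$-gapped code spaces.

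\textbf{Setup.} Suppose for contradiction that $\mathcal{C}_g$ is $(\epsilon, \mathcal{N})$-AQEC with recovery channel $\mathcal{R}$. Pick two distinct pure code states $|\psi_0\rangle, |\psi_1\rangle \in \mathcal{C}_g$; natural choices are two orthonormal logical basis states $|0_L\rangle, |1_L\rangle$, the pair $|\pm_L\rangle = (|0_L\rangle \pm |1_L\rangle)/\sqrt{2}$, or a phase-rotated superposition $|\psi_\alpha\rangle = (|0_L\rangle + e^{i\alpha}|1_L\rangle)/\sqrt{2}$. Applying Def.~\ref{AQEC} together with the triangle inequality and the monotonicity of the trace distance under CPTP maps yields
\begin{equation*}
\tfrac{1}{2}\|\mathcal{N}(|\psi_0\rangle\langle\psi_0|) - \mathcal{N}(|\psi_1\rangle\langle\psi_1|)\|_1 \;\ge\; \tfrac{1}{2}\|\,|\psi_0\rangle\langle\psi_0| - |\psi_1\rangle\langle\psi_1|\|_1 - 2\epsilon.
\end{equation*}
A universal upper bound on the left-hand side, valid for every $g$-gapped code, thus translates directly into a lower bound on $\epsilon$.

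\textbf{Structural input.} The proof rests on two equivalent descriptions of $\mathcal{N}|_{\mathcal{C}_g}$. First, in the adapted Fock basis $\{|\mathbf{n}+g\mathbf{k}\rangle\}$, $\mathcal{N}$ acts as a positive semidefinite Schur multiplier $\mathcal{N}(\rho)_{\mathbf{k},\mathbf{k}'} = \rho_{\mathbf{k},\mathbf{k}'}\, e^{-g^2\|\mathbf{k}-\mathbf{k}'\|_2^2\sigma^2/2}$, so off-diagonal Fock coherences are damped exponentially in the lattice distance. Second, because $U(\theta) = e^{-i\theta\cdot\hat{\mathbf{n}}}$ acts on $|\mathbf{n}+g\mathbf{k}\rangle$ as a phase that is $2\pi/g$-periodic in each mode (up to a global phase on the whole code), one can fold the Gaussian weight from \eqref{def:Gaussian-dephasing} onto the fundamental domain $[0,2\pi/g)^N$ to obtain a periodic density $\tilde p$ whose Fourier coefficients with respect to the natural frequency $g$ are exactly $e^{-g^2\|\mathbf{k}\|_2^2\sigma^2/2}$ for $\mathbf{k} \in \mathbb{Z}^N$. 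This Fourier picture makes transparent how the lattice sum $A := 1 + 2\sum_{\mathbf{k}\in\mathbb{N}^N\setminus\{\mathbf{0}\}} e^{-g^2\|\mathbf{k}\|_2^2\sigma^2/2}$ enters the bound.

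\textbf{Main obstacle and completion.} The crux is producing the \emph{universal} upper bound on the noisy trace distance; this is the hardest step because the coefficients $c_{j,\mathbf{k}}$ in \eqref{def:g-gapped-codewords} are arbitrary subject only to the $g$-gap constraint. I expect this requires probing the off-diagonal Fock coherences via $|\pm_L\rangle$-type superpositions (whose $1/\sqrt{2}$ normalization is responsible for the prefactor in $\epsilon_{g,\sigma}$), bounding the Schur-multiplied operator by its absolutely convergent Fourier-series representation on the folded domain, and using the non-negativity of Fock occupations to restrict the relevant sum to $\mathbb{N}^N$ (with the factor $2$ absorbing the $\mathbf{k}\leftrightarrow-\mathbf{k}$ pairing of conjugate modes). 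Once this universal bound is established, rearrangement of the setup inequality yields $\epsilon \ge 1 - A/\sqrt{2}$. For the non-triviality condition \eqref{tradeoff}, I would impose $A < \sqrt{2}$, use the product factorization $\sum_{\mathbf{k}\in\mathbb{N}^N} e^{-g^2\|\mathbf{k}\|_2^2\sigma^2/2} = \bigl(\sum_{k\ge 0} e^{-g^2 k^2\sigma^2/2}\bigr)^N$, upper bound the one-dimensional sum by $1/(1 - e^{-g^2\sigma^2/2})$ via $k^2\ge k$ for $k\ge 1$, and solve for $g\sigma$; the short identity $(\sqrt{2}-1)(2+\sqrt{2})=\sqrt{2}$ then converts the resulting expression into the displayed threshold.
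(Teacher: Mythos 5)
Your skeleton matches the paper's: reduce the no-go statement to a universal upper bound $\delta$ on $\|\mathcal N(\rho_0-\rho_1)\|_1$ for a suitable pair of orthogonal code states (the paper's Lemma~\ref{lem:connection}), then compute that bound from the $g$-gapped structure (Lemma~\ref{lem:main-result-v2}), and finally obtain the threshold by factorizing the lattice sum over modes and relaxing $k^2\ge k$ to a geometric series --- that last part of your proposal is essentially identical to the paper's. However, there is a genuine gap at exactly the point you flag as ``the crux'': you never establish the universal bound, and the mechanism you sketch would not produce the stated constant. A single pair $|\pm_L\>$ does not suffice: writing $|0_L\>=\sum a_{\bf j}|{\bf j}\>$, $|1_L\>=\sum b_{\bf j}|{\bf j}\>$, the diagonal (undamped) part of $\mathcal N(\rho_+-\rho_-)$ is $2\sum_{\bf j}\Real(a_{\bf j}^*b_{\bf j})|{\bf j}\>\<{\bf j}|$, whose trace norm can equal $2$ (take all amplitudes real, e.g.\ $a=(1,1)/\sqrt2$, $b=(1,-1)/\sqrt2$), so the bound is vacuous for that pair; the $1/\sqrt2$ does \emph{not} come from the normalization of $|\pm_L\>$ as you assert. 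The paper's essential trick is to consider \emph{two} pairs simultaneously, $(|+_L\>,|-_L\>)$ and $(|{+i}_L\>,|{-i}_L\>)$, whose noisy differences isolate $\Real(a_{\bf j}^*b_{\bf k})$ and $\Imag(a_{\bf j}^*b_{\bf k})$ respectively; the pointwise inequality $|\Real(z)|+|\Imag(z)|\le\sqrt2\,|z|$ combined with Cauchy--Schwarz ($\sum_{\bf j}|a_{\bf j}||b_{{\bf j}+g{\bf k}}|\le1$) bounds the \emph{sum} of the two trace norms by $2\sqrt2+4\sqrt2\sum_{{\bf k}\neq0}e^{-g^2\|{\bf k}\|_2^2\sigma^2/2}$, whence at least one pair satisfies the bound with constant $\sqrt2$. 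Without this averaging over the real and imaginary quadratures the proof does not go through.

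Two further bookkeeping points. First, your setup inequality $\tfrac12\|\mathcal N(\rho_0)-\mathcal N(\rho_1)\|_1\ge\tfrac12\|\rho_0-\rho_1\|_1-2\epsilon$ rearranges to $\epsilon\ge\tfrac12(1-\delta/2)$ for orthogonal states, not to $\epsilon\ge1-\delta/2$ as you claim at the end; you should either track this factor of $2$ or restructure the reduction (the paper's Lemma~\ref{lem:connection} bounds $\|\rho_i-\mathcal R(\mathcal N(\rho_i))\|_1$ for at least one $i$ directly via the triangle inequality rather than going through the recovered trace distance). Second, your Fourier-folding picture is a reasonable heuristic but is not load-bearing as written: even if the folded density were exactly uniform, $\mathcal N$ would only dephase in the gapped Fock basis, and the surviving diagonal mismatch between two orthogonal codewords still has to be controlled by the real/imaginary argument above.
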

The trade-off in Eq.~\eqref{tradeoff} highlights an analogous phenomenon to the Heisenberg uncertainty relation in a quantum error correction setting.  We plot our bound on $\epsilon_{g,\sigma}$ in Fig.~\ref{fig:plot}. 
Note that $\lim_{\sigma}    \epsilon_{g,\sigma} \to 1-1/\sqrt 2 \approx 0.2929$, which explains the horizontal asymptotes in Fig.~\ref{fig:plot}.
In a subsequent plot given by Fig.~\ref{fig:binomial-code-comparison}, we compare the performance of $g$-gapped binomial codes \cite{BinomialCodes2016} with our no-go bounds when $g=64$.

\begin{figure}
  \centering
  \includegraphics[width=0.7\linewidth]{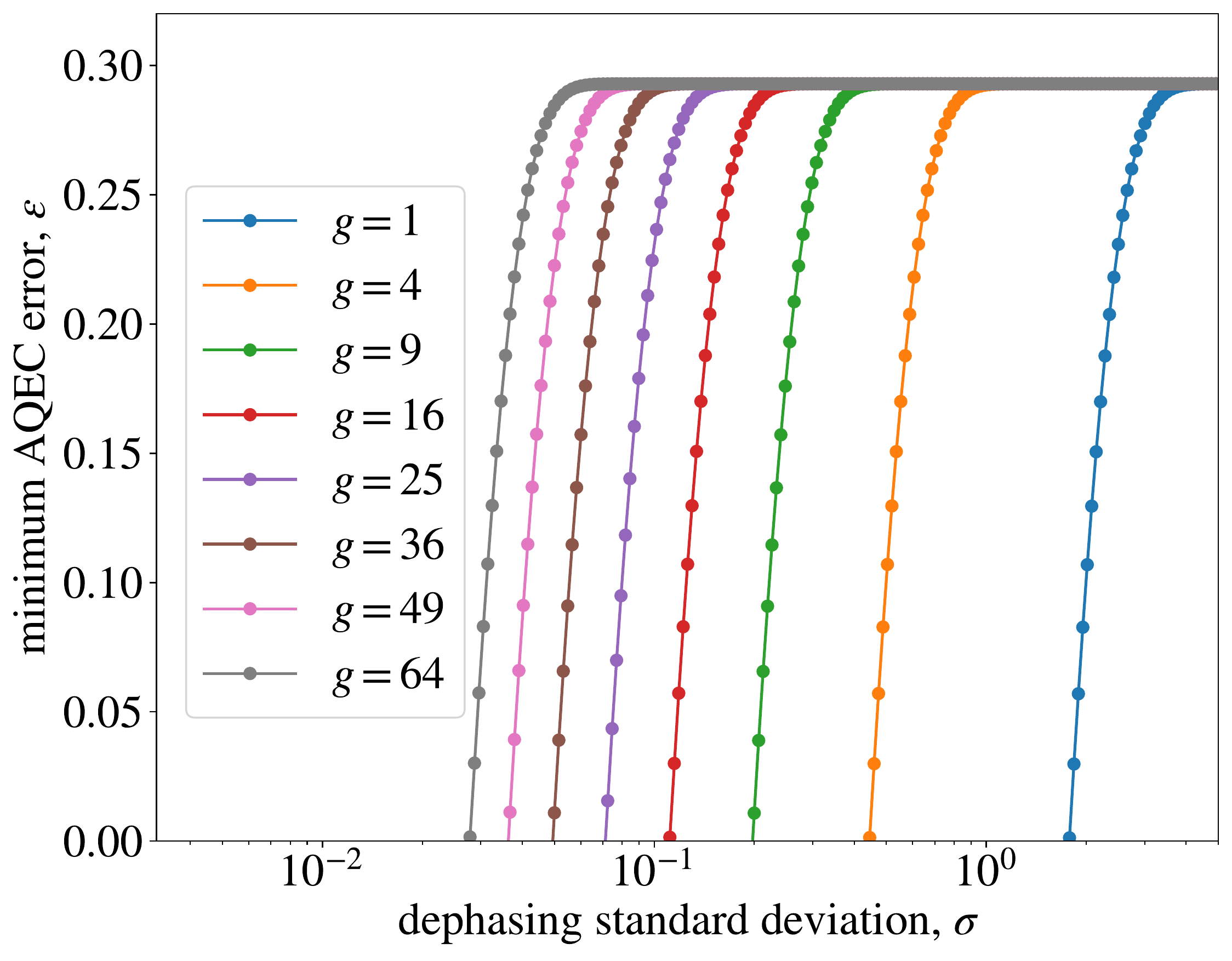}
  \caption{For single-mode $g$-gapped bosonic codes, we plot the value of $\epsilon_{g,\sigma}$ defined in Eq.~\eqref{EpsilonGsigma} as a function of the noise parameter $\sigma$ and the number shift distance $g$.  By virtue of Thm.~\ref{MainThm-v2} this gives upper bounds on $\epsilon$ for which $\epsilon$-AQEC bosonic quantum codes on a single mode exist against the noise channel $\mathcal E_\sigma$.}
  \label{fig:plot}
\end{figure}

Our no-go results for AQEC on bosonic codes with a finite number of modes are complementary to those relating to the achievability of AQEC bosonic codes.
Our results extend the converse bounds on quantum capacities for finite block-length qudit codes \cite{MaW14-converse,tomamichel2016quantum} to the bosonic setting, 
in the special case where we consider the effect of Gaussian dephasing errors on $g$-gapped bosonic codes.
In this context, our results give converse bounds on $g$-gapped bosonic codes with finite block-length.
To the best of our knowledge, no-go results for AQEC bosonic codes have never before been addressed, and our results can be interpreted to provide the first converse bounds for bosonic codes in an AQEC setting.
The simplicity of our result's proof as compared to related results on qubit-codes \cite{MaW14-converse,tomamichel2016quantum,ouyang2020linear} leads us to believe that our methods can pave the way ahead to provide more accessible results for converse bounds for bosonic codes. 

Besides no-go results for AQEC on bosonic codes, we prove that $g$-gapped codes suffice to correct purely number-shift errors. We also provide bounds on what $g$-gapped codes can achieve asymptotically under the influence of a convex combination of dephasing errors and amplitude damping errors. Namely, we show that AQEC bosonic codes on arbitrarily many modes can have with vanishing failure probabilities for quantum channels that introduce convex combinations of dephasing and amplitude damping errors on each bosonic mode. This reveals an additional trade-off between the number of modes used and the threshold with respect to $\sigma$.
 
\section{No-go for AQEC $g$-gapped bosonic codes}
In this section, we prove Thm.~\ref{MainThm-v2} which applies not just for single-mode $g$-gapped bosonic codes, but also to multi-mode $g$-gapped bosonic codes. Now, Def.~\ref{AQEC} can be equivalently stated in the following converse form. 
\begin{definition}[Alternative AQEC criterion] \label{AQEC2}
Given a non-negative number $\epsilon$ and a noise channel $\mathcal N$, 
we say that a quantum code $\mathcal C$ is not $(\epsilon,\mathcal N)$-AQEC if and only if for every $\mathcal R$ there exists a $|\psi\> \in \mathcal C$, so that
\begin{align}
 \frac{1}{2} \| (\mathcal R ( \mathcal N(|\psi\>\<\psi| )) - |\psi\>\<\psi|  \|_1 
 > \epsilon
 . \label{AQEC:second}
\end{align}
\end{definition}
This form of the AQEC criterion is more natural when proving an impossibility result. We start by establishing a simple lemma that relates the AQEC property to trace-norm closeness for pairs of orthogonal noisy codewords. 
\begin{lemma}\label{lem:connection}
Let $\delta$ be a positive real and $\mathcal N$ be a noise channel, and $\mathcal C$ be a quantum code. 
Suppose that there exist orthogonal density matrices $\rho_1$ and $\rho_2$ supported on the codespace $\mathcal C$ such that $\smash{\|\mathcal N(\rho_1 - \rho_2) \|_1 \le \delta}$.
Then $\mathcal C$ is not $(\epsilon,\mathcal N)$-AQEC where $\epsilon = 1- \delta/2$.
\end{lemma}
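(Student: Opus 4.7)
My plan is to prove the lemma by contraposition via the alternative AQEC criterion of Def.~\ref{AQEC2}. Assuming $\mathcal C$ is $(\epsilon,\mathcal N)$-AQEC with some recovery channel $\mathcal R$, I will derive an inequality in $\epsilon$ and $\delta$ that forces $\epsilon \ge 1 - \delta/2$. Three standard ingredients drive the argument: spectrally decomposing each $\rho_i = \sum_k p_k^{(i)}|\phi_k^{(i)}\>\<\phi_k^{(i)}|$ with $|\phi_k^{(i)}\>\in\mathcal C$ and applying convexity of the trace norm promotes the per-codeword AQEC bound to the mixed-state statement $\|\mathcal R\mathcal N(\rho_i) - \rho_i\|_1 \le 2\epsilon$; the trace norm is contractive under CPTP maps on Hermitian inputs, yielding $\|\mathcal R\mathcal N(\rho_1-\rho_2)\|_1 \le \|\mathcal N(\rho_1-\rho_2)\|_1 \le \delta$; and the orthogonality of $\rho_1, \rho_2$ gives $\|\rho_1-\rho_2\|_1 = \tr(\rho_1)+\tr(\rho_2) = 2$.

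I would then combine the three ingredients by the triangle inequality on the telescope
\begin{equation*}
\rho_1 - \rho_2 = [\rho_1 - \mathcal R\mathcal N(\rho_1)] + \mathcal R\mathcal N(\rho_1-\rho_2) + [\mathcal R\mathcal N(\rho_2) - \rho_2],
\end{equation*}
producing a scalar inequality of the form $2 \le c\epsilon + \delta$ whose rearrangement gives the threshold. To produce the pure-state witness demanded by Def.~\ref{AQEC2}, one then notes that whichever $\rho_i$ saturates the derived bound, its spectral decomposition contains at least one pure codeword $|\phi_k^{(i)}\>\in\mathcal C$ whose trace-distance deviation $\tfrac{1}{2}\|\mathcal R\mathcal N(|\phi_k^{(i)}\>\<\phi_k^{(i)}|) - |\phi_k^{(i)}\>\<\phi_k^{(i)}|\|_1$ strictly exceeds $\epsilon$, by the pigeonhole inherent in the convexity bound of step one.

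The main obstacle I anticipate is tightening the constants to match the claimed $\epsilon = 1 - \delta/2$. A naive triangle-inequality application yields $2 \le 4\epsilon + \delta$, i.e.\ $\epsilon \ge \tfrac{1}{2} - \tfrac{\delta}{4}$, a factor of two weaker than the stated conclusion. Recovering the sharp constant requires either applying the AQEC bound to the single averaged codespace state $\tfrac{1}{2}(\rho_1+\rho_2)$ so that only one AQEC deviation appears in the telescope, or invoking the pointwise inequality $\tfrac{1}{2}\|\sigma - |\psi\>\<\psi|\|_1 \ge 1 - \<\psi|\sigma|\psi\>$ and choosing an adversarial pure codeword whose post-recovery self-overlap is forced to be at most $\delta/2$ by the distinguishability collapse $\|\mathcal N(\rho_1-\rho_2)\|_1 \le \delta$ — the latter route converts the bound on a trace-norm distance between two different states directly into a bound on a fidelity of a single state with itself after noise-and-recovery, avoiding the factor-of-two loss.
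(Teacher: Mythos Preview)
Your core strategy---contractivity of $\mathcal R$ on $\mathcal N(\rho_1-\rho_2)$, the identity $\|\rho_1-\rho_2\|_1=2$ from orthogonality, the triangle inequality on the telescope, and a pigeonhole between the two deviations---is exactly the paper's proof. The paper argues directly for arbitrary $\mathcal R$ rather than by contraposition, but the chain of inequalities is the same: it obtains
\[
\|\rho_1-\mathcal R\mathcal N(\rho_1)\|_1 + \|\rho_2-\mathcal R\mathcal N(\rho_2)\|_1 \ge 2-\delta,
\]
pigeonholes to $\|\rho_i-\mathcal R\mathcal N(\rho_i)\|_1 \ge 1-\delta/2$ for some $i$, and then writes ``from which the result follows.''

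Your factor-of-two concern is therefore not a gap in \emph{your} argument relative to the paper's: the paper does nothing further to close it. It never invokes either of the sharpening routes you sketch (averaging to $\tfrac12(\rho_1+\rho_2)$, or a fidelity inequality $\tfrac12\|\sigma-|\psi\>\<\psi|\|_1 \ge 1-\<\psi|\sigma|\psi\>$); it simply identifies the bound $\|\rho_i-\mathcal R\mathcal N(\rho_i)\|_1 \ge 1-\delta/2$ with the not-$(1-\delta/2,\mathcal N)$-AQEC conclusion, leaving the $\tfrac12$ normalization of Def.~\ref{AQEC} implicit. The paper also does not carry out any pure-state extraction from a mixed $\rho_i$: in its sole application (Lem.~\ref{lem:main-result-v2}) the witnesses $\rho_1,\rho_2$ are already pure codewords, so the convexity/pigeonhole step you outline is unnecessary there. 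In short, your ``naive'' route already reproduces the paper's proof; the extra machinery you propose is neither used nor needed to match it.
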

\begin{proof}
Since quantum channels are contractive with respect to the trace-norm, 
\begin{align}
\| \mathcal R (\mathcal N ( \rho_1 - \rho_2 ) ) \|_1
\le \| \mathcal N ( \rho_1 - \rho_2 )  \|_1 \le \delta.
\label{eq:lem-eq-1}
\end{align}
By the triangle inequality, we can see that
\begin{align}
&\| (\rho_1 - \rho_2)\|_1 -
\| \mathcal R ( \mathcal N( \rho_1 - \rho_2) ) \|_1
\notag\\
&\le  \| (\rho_1 - \rho_2)  -  \mathcal R ( \mathcal N( \rho_1 - \rho_2) ) \|_1.
\label{eq:lem-eq-2}
\end{align}
Since $\rho_1$ and $\rho_2$ are orthogonal, we have $\|\rho_1 - \rho_2\|_1 = 2$. 
Then using \eqref{eq:lem-eq-2} with \eqref{eq:lem-eq-1}, 
we get 
\begin{align}
\| (\rho_1 - \rho_2)  -  \mathcal R ( \mathcal N( \rho_1 - \rho_2) ) \|_1
\ge &2-\delta.
\end{align}
By linearity of quantum channels, we have
\begin{align}
& (\rho_1 - \rho_2)  -  \mathcal R ( \mathcal N( \rho_1 - \rho_2) )\notag\\
 = &
 (\rho_1 -\mathcal R ( \mathcal N( \rho_1 ))) - (\rho_2  - \mathcal R ( \mathcal N( \rho_2))). 
\end{align}
Using the triangle inequality, it follows that
\begin{align}
&\|  (\rho_1 -\mathcal R ( \mathcal N( \rho_1 ))) - (\rho_2  - \mathcal R ( \mathcal N( \rho_2)))\|_1
\notag\\
\le& 
\|   \rho_1 -\mathcal R ( \mathcal N( \rho_1 ) )  \|_1
+
\|    \rho_2  - \mathcal R ( \mathcal N( \rho_2))\|_1.
\end{align}
Hence
\begin{align}
\|   \rho_1 -\mathcal R ( \mathcal N( \rho_1 ) )  \|_1
+
\|    \rho_2  - \mathcal R ( \mathcal N( \rho_2))\|_1
\ge& 2 -\delta.
\end{align}
Hence, either 
\begin{align}
\|   \rho_1 -\mathcal R ( \mathcal N( \rho_1 ) )  \|_1 
\ge& 1 -\delta/2,
\end{align}
or
\begin{align}
\|    \rho_2  - \mathcal R ( \mathcal N( \rho_2))\|_1
\ge& 1 -\delta/2,
\end{align}
from which the result follows.
\end{proof}

Our next lemma shows that dephasing noise can lead to trace-norm closeness for pairs of orthogonal noisy codewords.
 
\begin{lemma} \label{lem:main-result-v2} 
For any positive integer $g$, let $\mathcal C_g$ be any \smash{$N$-mode} bosonic quantum code with codewords of the form \eqref{def:g-gapped-codewords}.
Let $\mathcal N = \mathcal E_\sigma^{\otimes N}$.
Then, there always exists orthogonal pure states $|\phi\>$ and $|\psi\>$ in $\mathcal C_g$ such that 
\begin{align}
\|\mathcal N(  |\phi\>\<\phi| - |\psi\>\<\psi| ) \|_1  \le   
 \sqrt 2 + 2\sqrt 2 \sum_{\substack{ {\bf k}\in \mathbb N^N \\ {\bf k} \neq 0 \\}} e^{-g^2 \|{\bf k}\|_2^2 \sigma^2/2}.
\notag
\end{align}
\end{lemma}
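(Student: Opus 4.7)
My plan is to exhibit an orthogonal pair $(|\phi\rangle,|\psi\rangle)$ as a two-parameter family of superpositions of two logical codewords, and then bound the trace norm of $\mathcal N(|\phi\rangle\langle\phi|-|\psi\rangle\langle\psi|)$ by decomposing its Fock-basis matrix into shifted-diagonal components indexed by $\mathbf j\in\mathbb Z^N$. Assuming $\dim\mathcal C_g\ge 2$ (the case $\dim\mathcal C_g=1$ is vacuous), I fix orthonormal codewords $|0_L\rangle,|1_L\rangle\in\mathcal C_g$ and consider the family
\[
|\phi_\alpha\rangle=\tfrac{1}{\sqrt 2}(|0_L\rangle+e^{i\alpha}|1_L\rangle),\qquad |\psi_\alpha\rangle=\tfrac{1}{\sqrt 2}(|0_L\rangle-e^{i\alpha}|1_L\rangle),
\]
leaving the phase $\alpha$ to be chosen.

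Next, using $\mathcal N(|\mathbf n+g\mathbf k\rangle\langle\mathbf n+g\mathbf k'|)=e^{-g^2\|\mathbf k-\mathbf k'\|_2^2\sigma^2/2}|\mathbf n+g\mathbf k\rangle\langle\mathbf n+g\mathbf k'|$, I would expand $\mathcal N(|\phi_\alpha\rangle\langle\phi_\alpha|-|\psi_\alpha\rangle\langle\psi_\alpha|)$ in the Fock basis and partition it by shift: for each $\mathbf j\in\mathbb Z^N$, the piece $R_\mathbf j$ collects the matrix entries at positions $(\mathbf k,\mathbf k+\mathbf j)$. For $\mathbf j\neq 0$ the operator $R_\mathbf j$ has pairwise distinct nonzero rows and columns, so its singular values are just the absolute values of its entries; hence $\|R_\mathbf j\|_1$ is the $\ell^1$ sum of those entries, and Cauchy--Schwarz combined with $\sum_\mathbf k|c_{\chi,\mathbf k}|^2=1$ gives $\|R_\mathbf j^{(\chi)}\|_1\le e^{-g^2\|\mathbf j\|_2^2\sigma^2/2}$ for $\chi\in\{\phi,\psi\}$. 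For $\mathbf j=0$, $R_0$ is diagonal with trace norm $\sum_\mathbf k\bigl||c_{\phi,\mathbf k}|^2-|c_{\psi,\mathbf k}|^2\bigr|$; this is what $\alpha$ will control.

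Third, I would combine these bounds via the triangle inequality and exploit the rank-2 Hermitian structure of the preimage $|\phi\rangle\langle\phi|-|\psi\rangle\langle\psi|$: for any rank-$r$ Hermitian $A$ one has $\|A\|_1\le\sqrt r\,\|A\|_F$, so $\sqrt 2$ arises from applying this inequality in the appropriate two-dimensional subspace. I would apply this once to handle the diagonal contribution (choosing $\alpha$, e.g.\ by averaging and invoking the mean-value principle, so that $\sum_\mathbf k||c_{\phi,\mathbf k}|^2-|c_{\psi,\mathbf k}|^2|\le\sqrt 2$) and once to pair conjugate shifts $R_\mathbf j$ and $R_{-\mathbf j}$ into Hermitian blocks with controlled trace norm. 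Pairing $\pm\mathbf j$ together is what halves the sum from $\mathbb Z^N\setminus\{0\}$ to the orthant $\mathbb N^N\setminus\{0\}$ at the cost of a Frobenius factor $\sqrt 2$, producing the prefactor $2\sqrt 2$.

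The main obstacle is executing these factor-$\sqrt 2$ refinements cleanly: naive triangle-inequality bookkeeping yields $2+2\sum_{\mathbf j\in\mathbb Z^N\setminus\{0\}}e^{-g^2\|\mathbf j\|_2^2\sigma^2/2}$, which is weaker than the target (especially for $N\ge 2$, where $\sum_{\mathbf j\in\mathbb Z^N\setminus\{0\}}>2\sum_{\mathbf k\in\mathbb N^N\setminus\{0\}}$). The careful point is therefore (i) verifying that the variational/averaging choice of $\alpha$ really yields $\sum_\mathbf k||c_{\phi,\mathbf k}|^2-|c_{\psi,\mathbf k}|^2|\le\sqrt 2$ for every 2D subspace of $\mathcal C_g$, and (ii) checking that the Frobenius-plus-pairing argument for the off-diagonal part assembles into the half-space sum $\sum_{\mathbf k\in\mathbb N^N\setminus\{0\}}$ rather than the full lattice sum.
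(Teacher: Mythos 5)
Your skeleton is right (superpositions of two fixed codewords, a shifted-diagonal decomposition in the Fock basis, an entrywise trace-norm bound, and Cauchy--Schwarz per shift), and you correctly diagnose that naive bookkeeping only yields $2+2\sum_{\mathbf j\in\mathbb Z^N\setminus\{0\}}e^{-g^2\|\mathbf j\|_2^2\sigma^2/2}$, so that a genuine factor-$\sqrt2$ saving is required. The gap is in your mechanism for obtaining that saving on the off-diagonal part. The paired blocks $R_{g\mathbf j}+R_{-g\mathbf j}$ are \emph{not} rank $2$: the rank-$2$ structure lives in the preimage $|\phi\rangle\langle\phi|-|\psi\rangle\langle\psi|$ and is destroyed by $\mathcal N$; each shifted-diagonal block is a banded operator of unbounded rank, so $\|A\|_1\le\sqrt{r}\,\|A\|_F$ with $r=2$ is unavailable. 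Even setting rank aside, for a generalized-shift matrix $\|\cdot\|_F$ is the $\ell^2$ norm of the entries and can coincide with the $\ell^1$ norm (one nonzero entry), so passing to the Frobenius norm gains no constant. Your second worry is also well founded: for $N\ge2$ the shifts between support points range over $g\mathbb Z^N\setminus\{0\}$, which strictly contains $\pm g(\mathbb N^N\setminus\{0\})$, and a mixed-sign shift such as $(g,-g)$ cannot be folded into the orthant sum by pairing $\mathbf j$ with $-\mathbf j$. (For what it is worth, the paper's own displayed decomposition of $\mathcal N(\rho_+-\rho_-)$ retains only the shifts in $\pm g\mathbb N^N$ and is exact only for $N=1$.)

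The paper gets the $\sqrt2$ on every term at once by a device you should adopt in place of the rank/Frobenius step: it considers \emph{two} orthogonal pairs simultaneously, $|\pm_L\rangle=(|0_L\rangle\pm|1_L\rangle)/\sqrt2$ and $|\pm i_L\rangle=(|0_L\rangle\pm i|1_L\rangle)/\sqrt2$, i.e.\ your $\alpha=0$ and $\alpha=\pi/2$. The Fock-basis entries of $\mathcal N(\rho_+-\rho_-)$ are $2\,\mathrm{Re}(a_{\mathbf j}^*b_{\mathbf k})e^{-\cdots}$ and those of $\mathcal N(\rho_{+i}-\rho_{-i})$ are $2\,\mathrm{Im}(a_{\mathbf j}^*b_{\mathbf k})e^{-\cdots}$; adding the two entrywise $\ell^1$ bounds, applying $|\mathrm{Re}\,z|+|\mathrm{Im}\,z|\le\sqrt2\,|z|$ to every entry (diagonal and off-diagonal alike), and then Cauchy--Schwarz, bounds the \emph{sum} of the two trace norms by $2\sqrt2+4\sqrt2\sum_{\mathbf k\neq0}e^{-g^2\|\mathbf k\|_2^2\sigma^2/2}$, and a pigeonhole step hands the lemma to one of the two pairs. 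The moral is that the choice of $\alpha$ must do double duty: you must average the \emph{entire} entrywise bound over $\alpha$ (your continuous average would even give $4/\pi$ in place of $\sqrt2$, which beats the stated constant for $N=1$), whereas choosing $\alpha$ to control the diagonal alone and treating the off-diagonal blocks by a separate rank-$2$ Frobenius argument does not close the gap.
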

\begin{proof}[Proof of Lemma \ref{lem:main-result-v2}] 
Pick any pair of orthogonal codewords that we will call $|0_L\>$ and $|1_L\>$ and expand in the dephasing-basis, which for us is the
number basis, so that 
$|0_L\> = \sum_{{\bf k} \in \mathbb N^N} a_{\bf k} |{\bf k}\>$ and 
$|1_L\> = \sum_{{\bf k} \in \mathbb N^N} b_{\bf k} |{\bf k}\>$, 
where normalisation demands that 
$\sum_{{\bf k} \in \mathbb N^N } |a_{\bf k}|^2 =
\sum_{{\bf k} \in \mathbb N^N } |b_{\bf k}|^2= 1$.
Then the following states are normalised codewords
\begin{align}
|+_L\> &= \frac{|0_L\> + |1_L\>}{\sqrt 2} = \sum_{{\bf k} \in \mathbb N^N } \frac{a_{\bf k}+b_{\bf k}}{\sqrt 2} |{\bf k}\>\\
|-_L\> &= \frac{|0_L\> - |1_L\>}{\sqrt 2} = \sum_{{\bf k} \in \mathbb N^N } \frac{a_{\bf k}-b_{\bf k}}{\sqrt 2} |{\bf k}\>\\
|+i_L\> &= \frac{|0_L\> + i |1_L\>}{\sqrt 2} = \sum_{{\bf k} \in \mathbb N^N } \frac{a_{\bf k}+ib_{\bf k}}{\sqrt 2}|{\bf k}\>\\
|-i_L\> &= \frac{|0_L\> - i |1_L\>}{\sqrt 2} = \sum_{{\bf k} \in \mathbb N^N } \frac{a_{\bf k}-ib_{\bf k}}{\sqrt 2} |{\bf k}\>.
\end{align}
Let $\rho_+ =  |+_L\>\<+_L|$, 
 $\rho_- =  |-_L\>\<-_L|$, 
 $\rho_{+i} =  |+i_L\>\<+i_L|$, 
 $\rho_{-i} =  |-i_L\>\<-i_L|$.
For the dephasing noise model of interest (recall Eq.~\eqref{DephaseModel}) and the $g$-gapped property of the bosonic code, it follows that
\begin{align}
&\mathcal N( \rho_+ - \rho_- ) \notag\\
=& 2\sum_{{\bf j},{\bf k} \in \mathbb N^N} 
\Real (a_{\bf j}^* b_{\bf k}) |{\bf j}\>\<{\bf k}| \exp(-\|{\bf j}-{\bf k}\|_2^2 \sigma^2/2)\notag \\
=&
2\sum_{{\bf j}\in \mathbb N^N} \Real (a_{\bf j}^* b_{\bf j}) |{\bf j}\>\<{\bf j}| \notag \\
&+
2\sum_{{\bf j}\in \mathbb N^N} \sum_{{\bf k} \in \mathbb N^N, {\bf k} \neq 0} \Real (a_{\bf j}^* b_{{\bf j}+g{\bf k}}) |{\bf j}\>\<{\bf j}+g{\bf k}| e^{-g^2\|{\bf k}\|_2^2 \sigma^2/2}\notag \\
&+
2\sum_{{\bf j} \in \mathbb N^N}  \sum_{{\bf k} \in \mathbb N^N,{\bf k}\neq 0}
\Real (a_{{\bf j}+g{\bf k}}^* b_{\bf j}) 
|{\bf j}+g{\bf k}\>\<{\bf j}| e^{-g^2\|{\bf k}\|_2^2 \sigma^2/2}.\notag
\end{align}
Since the trace norm of every matrix basis $|{\bf j}\>\<{\bf k}|$ is at most 1, we can apply the triangle inequality for the trace norm to get that
$\|\mathcal E_\sigma( \rho_+ - \rho_- )\|_1 $ is at most
\begin{align} 
2\sum_{{\bf j}\in \mathbb N^N}
\bigl(
 |\Real (a_{\bf j}^* b_{\bf j})|  
+
2 
\sum_{\substack{{\bf k} \in \mathbb N^N\\ {\bf k} \neq 0}}
|\Real (a_{\bf j}^* b_{{\bf j}+g{\bf k}})| 
e^{-g^2\|{\bf k}\|_2^2 \sigma^2/2}
\bigr).\notag
\end{align}
Similarly, $\|\mathcal E_\sigma( \rho_{+i} - \rho_{-i} ) \|_1 $ is at most
\begin{align} 
2\sum_{{\bf j}\in \mathbb N^N} 
\bigl(
|\Imag (a_{\bf j}^* b_{\bf j})|  
+
2
\sum_{\substack{{\bf k} \in \mathbb N^N\\ {\bf k} \neq 0}}
|\Imag (a_{\bf j}^* b_{{\bf j}+g{\bf k}})| 
e^{-g^2\|{\bf k}\|_2^2 \sigma^2/2}
\bigr).\notag
\end{align}
Now, for any complex number $z$ we have that 
$|\Real(z)| + |\Imag(z)| \le \sqrt 2|z|$. Therefore
\begin{align}
&\|\mathcal E_\sigma( \rho_{+} - \rho_{-} ) \|_1 
+ \|\mathcal E_\sigma( \rho_{+i} - \rho_{-i} ) \|_1\notag\\
& \le 
2\sqrt 2\sum_{{\bf j}\in \mathbb N^N} |a_{\bf j}^* b_{\bf j}| 
+
4\sqrt 2
\sum_{{\bf j}\in \mathbb N^N}
\sum_{\substack{{\bf k} \in \mathbb N^N\\ {\bf k} \neq 0}}
|a_{\bf j}^* b_{{\bf j}+ g {\bf k}}|  e^{-g^2 \|{\bf k}\|_2^2 \sigma^2/2}.
\notag
\end{align}
Now we like to apply the Cauchy-Schwarz inequality over the above summation indices ${\bf j}$.
Let 
$|\psi_0\> = \sum_{{\bf j}\in \mathbb N^N} |a_{\bf j}| |{\bf j}\>$ and
$|\psi_1\> = \sum_{{\bf j}\in \mathbb N^N} |b_{\bf j}| |{\bf j}\>$.
Clearly
$\<\psi_0|\psi_0\> = 
\sum_{{\bf j}\in \mathbb N^N} |a_{\bf j}|^2 = 1$ 
and 
$\<\psi_1|\psi_1\> = 
\sum_{{\bf j}\in \mathbb N^N} |b_{\bf j}|^2 = 1$.
Hence, the Cauchy-Schwarz inequality yields
\begin{align}
\sum_{{\bf j}\in \mathbb N^N}
|a_{\bf j}^* b_{\bf j} |
=|\<\psi_0|\psi_1\>|\le 
\sqrt{\<\psi_0|\psi_0\>\<\psi_1|\psi_1\>}\le 
1.\notag
\end{align} 
Now let $|\phi_{g {\bf k}}\> = 
 \sum_{{\bf j}\in \mathbb N^N} |b_{\bf j + {\bf k}}| |{\bf j}\>$.
 Clearly, $\<\phi_{g {\bf k}} |\phi_{g {\bf k}}\> \le \<\psi_1|\psi_1\>= 1$.
Hence 
\begin{align}
\sum_{{\bf j}\in \mathbb N^N}
|a_{\bf j}^* b_{{\bf j}+g{\bf k}} | 
=& |\<\psi_0 |\phi_{g {\bf k}}\>|
\le 
\sqrt{\<\psi_0|\psi_0\>\<\phi_{g {\bf k}}|\phi_{g {\bf k}}\>}
\le 1. \notag
\end{align}
Hence, $\|\mathcal E_\sigma( \rho_{+} - \rho_{-} ) \|_1 
+ \|\mathcal E_\sigma( \rho_{+i} - \rho_{-i} ) \|_1$ is at most
\begin{align}  2\sqrt 2 
+ 4\sqrt 2  
\sum_{\substack{{\bf k} \in \mathbb N^N\\ {\bf k} \neq 0}}
e^{-g^2 \|{\bf k}\|_2^2 \sigma^2/2}.\label{eq:normsum}
\end{align}
Given two positive real numbers $a$ and $b$ and an inequality $a + b \le c$, we know that either $a \le c/2$ or $b\le c/2$. Applying this reasoning to \eqref{eq:normsum}, we get that either
\begin{align}
 \|\mathcal E_\sigma( \rho_{+} - \rho_{-} ) \|_1  
\le&   \sqrt 2 +2\sqrt 2  \sum_{\substack{{\bf k} \in \mathbb N^N\\ {\bf k} \neq 0}}
e^{-g^2 \|{\bf k}\|_2^2\sigma^2 / 2},
\end{align}
or
\begin{align}
  \|\mathcal E_\sigma( \rho_{+i} - \rho_{-i} ) \|_1 
  \le&   \sqrt 2 +2\sqrt 2 \sum_{\substack{{\bf k} \in \mathbb N^N\\ {\bf k} \neq 0}}
e^{-g^2 \|{\bf k}\|_2^2\sigma^2/2}.
\end{align}
This proves the result.
\end{proof}

Combining Lem.~\ref{lem:connection} and Lem.~\ref{lem:main-result-v2} we immediately deduce the first statement Thm.~\ref{MainThm-v2}. To prove the second result of Thm.~\ref{MainThm-v2}, we determine the values of $\sigma$ for which Thm.~\ref{MainThm-v2} is non-trivial so that $\epsilon_{g,\sigma}>0$.  
Now, we observe the following fact.
 \begin{remark}
Using a geometric series, we find that
\begin{equation}
\label{epsilonApprox-v2}
    \epsilon_{g,\sigma} \geq 1- \frac{1}{\sqrt{2}}\left( 1 + 
    2 \left(
    \frac{1}{1-N e^{-g^2\sigma^2/2}} - 1
    \right)
    \right).
\end{equation}
When $N=1$, we have 
\begin{align}
    \epsilon_{g,\sigma} \geq 1- \frac{1}{\sqrt{2}}\left( 1 + \frac{2}{e^{g^2 \sigma^2 / 2}-1}\right).\label{epsilonApprox}
\end{align}
\end{remark}
We obtain this by noting that when $r \geq 1$ we have the relaxation $\sum_{k=0}^\infty r^{-k^2} \leq \sum_{k=0}^\infty r^{-k} = 1 / (1-1/r) $. 
Hence,
\begin{align}
    \sum_{
        \substack{
            {\bf k }\in \mathbb N^N
        }
    } e^{-g^2 \|{\bf k}\|_2^2 \sigma^2/2}
    &\le
    \left(
    \sum_{
        \substack{
            {\bf k }\in \mathbb N
        }
    } e^{-g^2 k \sigma^2/2}
    \right)^N 
    =
   (1-1/r )^{-N}.\notag
\end{align}
Setting $r=\exp(g^2 \sigma^2 / 2)$ and applying to the expression for $\epsilon_{g,\sigma}$ gives the result.

To obtain upper bounds on values of $\sigma$, we let $\sigma_{\rm thres}$ denote the value where the right side of Eq.~\eqref{epsilonApprox} equals zero.  Therefore, we solve the equation
\begin{align}
 1- \frac{1}{\sqrt{2}}\left( 
 \frac{2}{(1 - e^{-g^2 \sigma_{\rm thres}^2 / 2})^N} - 1 \right) = 0.\label{eq:threshold-equation}
\end{align}
Rearranging terms in \eqref{eq:threshold-equation}, we get
\begin{align}
 e^{-g^2 (\sigma_{\rm thres})^2/2}  
 =
1-2^{3N/2}(2+\sqrt 2)^{-1/N}
\end{align}
Solving this, we get 
\begin{align}
 g \sigma_{\rm thres} &=  
\sqrt{-2 \log\left(  1-2^{3N/2}(2+\sqrt 2)^{-1/N} \right) }.
\end{align}
For small values of $N$, we find that

\begin{align}
g \sigma_{\rm thres} &\approx
 \begin{cases}
  1.87, & N=1\\
  2.19, & N=2\\
  2.36 ,& N=3\\
  2.48 ,& N=4\\
  2.56 ,& N=5.
 \end{cases}
 \end{align}

This is reminiscent of a Heisenberg uncertainty relationship between excitation errors and phase errors.  We can tighten this bound by, instead of using a geometric series to bound $\sum_{k=1}^\infty \exp(-g^2 k^2 \sigma^2/2)$, to evaluate the sum explicitly. 
Our main result hence implies that, whenever $\sigma$ is greater than $\sigma_{\rm thres}$, we cannot completely suppress the errors induced by the Gaussian dephasing channel, no matter how much energy the bosonic code has.
 
\section{The $g$-gap is sufficient for number-shift resilience}
  We now prove the existence of $g$-gapped codes encoding a single-qubit that are resilient against number-shift errors, purely by virtue of their $g$-gapped property.
 Let $\Omega$ denote a finite set of Kraus operators that induce number shifts in the Fock basis.
 In particular, every Kraus operator in $K_p \in \Omega$ has the form 
 $K_p=\sum_{j\in \mathbb N} k_{p,j} |j+u\>\<j|$ or
  $K_p=\sum_{j\in \mathbb N} k'_{p,j} |j\>\<j+u|$, corresponding to number gain and number subtraction respectively for some positive integer $u$ such that $u < g/2$ and complex coefficients $k_{p,j}$ and $k'_{p,j}$.
From the $g$-gap criterion together with the fact that $u < g/2$, if distinct logical codewords $|j_L\>$ in a $g$-gapped code are supported on a distinct set of Fock states, 
 then it is clear that for all $K,K'\in \Omega$ and for $j \neq j'$ we have
 \begin{align}
\<j_L | K^\dagger  K' | j'_L \>=0.
\end{align}
The only requirements from the Knill-Laflamme quantum error correction criterion that do not automatically follow from the $g$-gapped criterion are the non-deformation conditions given by
\begin{align}
\<j_L | K^\dagger  K' | j_L\>= \<j'_L | K^\dagger K' | j'_L\>\label{eq:nodeform}
\end{align} 
for every $K,K' \in \Omega$ and all logical codewords $| j_L\>$ and $| j'_L\>$.
In the special case where a $g$-gapped code encodes a single qubit, techniques in Ref~\cite{movassagh2020constructing} imply the existence of codes that satisfy the non-deformation condition \eqref{eq:nodeform}. To see this, consider single-mode bosonic codes. 
Define a matrix, 
\begin{align}
A &= \sum_{K,K' \in \Omega \cup\{I\}} \sum_{k \in \mathbb N} 
(
{\rm Re}(\<gk | K^\dagger  K' |gk\>)
|K,K',1\>\<k| \notag\\
&\quad+
{\rm Im}(\<gk | K^\dagger  K' |gk\>)
|K,K',2\>\<k|
)
,\label{Amatrix}
\end{align}
where $I$ denotes the identity operator.
Since every matrix element of $A$ is either ${\rm Re}(\<gk | K^\dagger  K |gk\>)$ or ${\rm Im}(\<gk | K^\dagger  K |gk\>)$ which are both real, the matrix $A$ is a real matrix.
Now let $|\xi\>$ be any non-zero real vector such that
\begin{align}
|\xi\> = \sum_{k \in \mathbb N} x_k |k\>,
\end{align}
and 
\begin{align}
A |\xi\> = 0.
\end{align}
Such non-zero vectors $|\xi\>$ always exist because $A$ has an infinite number of columns and a finite number of rows, and hence its kernel must have a positive dimension.
Then, it follows using the techniques of \cite{movassagh2020constructing} that that the code spanned by 
\begin{align}
|0_L\> = \sum_{k \in \mathbb N}x^+_k  | gk\> / \sqrt{\|{\bf x}\|_2},\notag\\
|1_L\> = \sum_{k \in \mathbb N}x^-_k   | gk\> / \sqrt{\|{\bf x}\|_2}
\end{align}
is a $g$-gapped code that corrects the number shift errors in $\Omega$, 
where $x^+_k = \max(0,x_k)$, 
$x^-_k = \max(0,-x_k)$ and $\| {\bf x} \|_2$ denotes the 2-norm of the vector $(x_k)_{k \in \mathbb N}$.

\section{Tightness of no-go results on a single bosonic mode}
 
\begin{figure}
  \centering
  \includegraphics[width=0.7\linewidth]{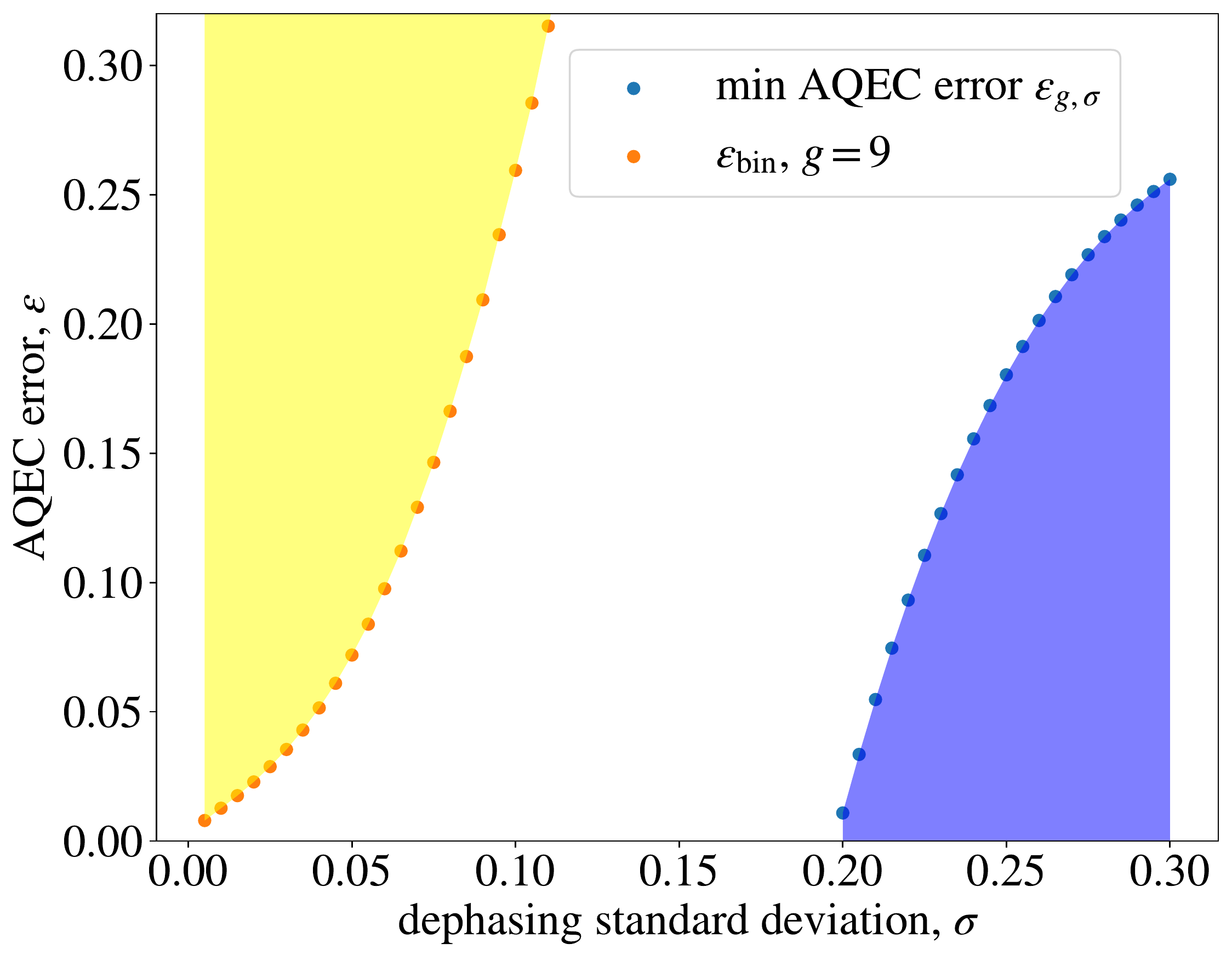}
  \caption{We compare upper bounds on $\epsilon_{\rm bin}$ using $g$-gapped binomial code with $\epsilon_{g,\sigma}$ from Thm.~\ref{MainThm-v2}, where $g=9$. Binomial codes are optimized over those that correct at least one gain and loss error, and with $D \in [2,50]$.  
For every $(\epsilon, \sigma)$ pair in the yellow region, there exists an $\epsilon$-AQEC code that corrects Gaussian dephasing with standard deviation $\sigma$.
For every $(\epsilon, \sigma)$ pair in the blue region, there does not exists any $\epsilon$-AQEC code that corrects Gaussian dephasing with standard deviation $\sigma$.
The white region is the intermediate region where we do not know if an AQEC code is possible.} 
  \label{fig:binomial-code-comparison}
\end{figure}
 
To investigate the tightness of our no-go results on $g$-gapped AQEC codes in the presence of Gaussian dephasing errors of standard deviation $\sigma$ as given in Thm.~\ref{MainThm-v2}, we investigate the performance of an explicit family of $g$-gapped bosonic codes. 
We investigate binomial codes \cite{BinomialCodes2016}, which are bosonic variants of permutation-invariant quantum codes designed for spin-systems \cite{ouyang2014permutation,OUYANG201743} can correct number-shift errors such as gain and loss errors, which are given explicitly by $(\hat a^\dagger)^j$ and $\hat a^k$. Here $\hat a$ denotes the lowering operator and $j$ and $k$ are non-negative integers that count the number of gain or loss errors.

The number of correctible phase errors in binomial codes can be understood from the series expansion of the rotation operator $e^{-i \theta \hat n}$ given by $e^{-i \theta \hat n} = {\bf 1} - i \theta \hat n  - \theta^2 \hat n^2/2! + \dots$. 
Correctibility of polynomials in $\hat n$ corresponds to the correctibility of the leading order terms in the series expansion of $e^{-i \theta \hat n}$.
Hence, the number of phase errors that a single-mode bosonic code corrects is defined to be the maximum order of the correctible polynomials in $\hat n$. 

For binomial codes encoding a single logical qubit that correct $G$ gain and $L$ loss errors, we must have $g \ge G+L+1$ \cite[Eq (7)]{BinomialCodes2016}. 
Correcting $D$ phase errors requires a maximum photon number of $n_{\rm max} = (\max\{L,G,2D\}+1)g.$
When $D\ge G/2$ and $D \ge L/2$, the maximum number of photons required simplifies to
\begin{align}
   n_{\rm max} = (2D+1)g.
\end{align}
In this scenario, we can consider a family of $g$-gapped binomial code that corrects $D$ phase errors with the logical codewords
\begin{align}
|0_L \> &= \sqrt{2^{-2D}} \sum_{\substack{0\le j \le 2D+1\\ j {\ \rm even} }}
\sqrt{\binom {2D+1}j} |gj\> \notag\\
|1_L\> &= \sqrt{2^{-2D}} \sum_{\substack{0\le j \le 2D+1\\ j {\ \rm odd} }}
\sqrt{\binom {2D+1}j} |gj\>.\label{binomial-codewords}
\end{align} 

The requirement of correcting gain and loss errors introduces the bound $D \ge 2$. Now, a Gaussian dephasing channel introduces the rotation operator $\exp(-i \theta \hat n)$ with probability $p(\theta)$ where $p(\theta)$ is the
Gaussian probability distribution with standard deviation $\sigma$ that appears in \eqref{def:Gaussian-dephasing}.
The truncation error of using $R_{\theta,D} = \sum_{j=0}^D (-i \theta \hat n)^j/j!$ in place of $\exp(-i \theta \hat n)$ on 
$|n\>$ is at most $(|\theta| n)^{D+1}/(D+1)!$.
Now let us denote $\overline R_{\theta,D} = \sum_{j=D+1}^\infty(-i \theta \hat n)^j/j!$. Clearly $e^{-i \theta \hat n } = R_{\theta,D}+\overline R_{\theta,D} $.
Any $|\psi\>$ in the binomial code's codespace can be written as $|\psi\> = a|0_L\> + b|1_L\>$, using the fact that $|a|\le 1$ and $|b|\le 1$, we find that 
\begin{align}
 \| \overline R_{\theta,D} |\psi\> \|
&\le
 \| \overline R_{\theta,D} a|0_L\> \|
+
 \| \overline R_{\theta,D} b|1_L\> \| 
 \notag\\
&\le
 \| \overline R_{\theta,D} |0_L\> \|
+
 \| \overline R_{\theta,D} |1_L\> \| 
 \notag\\
&\le
 \sqrt{2^{-2D}} \sum_{\substack{0\le j \le 2D+1\\ j {\ \rm even} }}
\sqrt{\binom{2D+1}{j}}  \| \overline R_{\theta,D} |j\> \|
\notag\\
&\quad +
 \sqrt{2^{-2D}} \sum_{\substack{0\le j \le 2D+1\\ j {\ \rm odd} }}
\sqrt{\binom{2D+1}{j}} \| \overline R_{\theta,D} |j\> \| 
 \notag\\
 &\le 
 2^{-D}  \sum_{\substack{0\le j \le 2D+1 }}
\sqrt{ \binom {2D+1} j} \frac{(|\theta|gj)^{D+1}}{(D+1)!} \notag\\
&\le\epsilon_{{\rm bin},\theta}, \label{eq:stirling}
\end{align}
where
\begin{align}
\epsilon_{{\rm bin},\theta}
&=
2^{-D}  \sum_{\substack{0\le j \le 2D+1 }}
\sqrt{ \binom {2D+1} j} \frac{ (egj/(2D+1))^{2D+1}}{\sqrt{2\pi (2D+1)}} .
\end{align}
The last inequality in \eqref{eq:stirling} follows from $n^n/n! \le \frac{e^n}{\sqrt{2\pi n}}$, which implies that  
\begin{align}
(gj)^n/n! 
&\le 
\frac{ (egj/n)^n}{\sqrt{2\pi n}}.
\end{align}
Now let 
 \begin{align}
\epsilon_{\rm bin}  =
\frac{1}{2} \int_{-\phi}^\phi p(\theta) 
\left(  \epsilon_{{\rm bin},\theta} + 2 \epsilon_{{\rm bin},\theta}^2 \right) d\theta 
 + 2 \int_{\phi}^{\infty}p(\theta) d\theta .
 \end{align}
 Then we prove the following lemma in Appendix A.
 \begin{lemma}
 Let $\mathcal C$ be the state of pure states in the span of the logical codewords of the binomial code defined in \eqref{binomial-codewords}.
There exists a recovery map $\mathcal R$ such that for 
 \begin{align}
\frac{1}{2} \max_{|\psi\> \in \mathcal C} \| \mathcal R(\mathcal N_\sigma (|\psi\>\<\psi|)) - |\psi\>\<\psi| \|_1 
&\le \epsilon_{\rm bin} 
 \end{align}
 \end{lemma}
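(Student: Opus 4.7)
The plan is to build a single fixed recovery map \(\mathcal{R}\) from the Knill--Laflamme conditions that the binomial code \eqref{binomial-codewords} satisfies on polynomials in \(\hat n\) of degree at most \(D\), and to split the Gaussian dephasing channel into a low-angle regime \(|\theta|\le \phi\), where the Taylor truncation \(R_{\theta,D}\) is an accurate proxy for \(e^{-i\theta\hat n}\), and a tail regime \(|\theta|>\phi\) that is charged directly to the probability mass of \(p(\theta)\) outside \([-\phi,\phi]\). This matches the structure of \(\epsilon_{\rm bin}\) term-by-term.

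First I would observe that the binomial code is specifically designed so that \(\{I,\hat n,\hat n^2,\ldots,\hat n^D\}\) (together with the gain and loss operators) satisfies the Knill--Laflamme conditions on the codespace, so a single recovery \(\mathcal R\) inverts every operator in the span of this set. Since \(R_{\theta,D}\) lies in this span for every \(\theta\), the same \(\mathcal R\) exactly inverts \(R_{\theta,D}\) uniformly in \(\theta\), returning \(|\psi\>\<\psi|\) up to a scalar determined by \(\|R_{\theta,D}|\psi\>\|^2\). The tail contribution is straightforward: the part of \(\mathcal N_\sigma(|\psi\>\<\psi|)\) supported on \(|\theta|>\phi\) has trace norm \(2\int_\phi^\infty p(\theta)\,d\theta\), and after applying any channel \(\mathcal R\) it remains bounded by the same quantity, so the triangle inequality charges at most \(2\int_\phi^\infty p(\theta)\,d\theta\) to the final trace distance, matching the tail contribution in \(\epsilon_{\rm bin}\).

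For the central regime I would write \(e^{-i\theta\hat n}|\psi\> = R_{\theta,D}|\psi\> + \overline R_{\theta,D}|\psi\>\) and expand the outer product to compare \(e^{-i\theta\hat n}|\psi\>\<\psi|e^{i\theta\hat n}\) with \(R_{\theta,D}|\psi\>\<\psi|R_{\theta,D}^\dagger\); the bound \(\|\overline R_{\theta,D}|\psi\>\|\le\epsilon_{{\rm bin},\theta}\) from \eqref{eq:stirling} controls both the cross terms (giving a linear-in-\(\epsilon_{{\rm bin},\theta}\) contribution) and the \(\overline R_{\theta,D}|\psi\>\<\psi|\overline R_{\theta,D}^\dagger\) remainder (giving a quadratic-in-\(\epsilon_{{\rm bin},\theta}\) contribution). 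Applying \(\mathcal R\) to the truncated piece recovers \(|\psi\>\<\psi|\) exactly up to a subnormalization factor absorbed in the same quadratic order, and contractivity of \(\mathcal R\) preserves the perturbation bound. Integrating the pointwise bound \(\epsilon_{{\rm bin},\theta}+2\epsilon_{{\rm bin},\theta}^2\) against \(p(\theta)\) over \([-\phi,\phi]\), adding the tail, and halving to convert trace norm to trace distance yields the claimed \(\epsilon_{\rm bin}\). The hard part will be the precise bookkeeping in this central step — simultaneously tracking the cross-term contributions, the post-recovery normalization drift, and the subnormalized tail returned by \(\mathcal R\), so that all pieces combine into exactly the coefficients \(1\) and \(2\) multiplying \(\epsilon_{{\rm bin},\theta}\) and \(\epsilon_{{\rm bin},\theta}^2\), and in verifying that a single \(\theta\)-independent recovery \(\mathcal R\) really does work uniformly over all rotation angles, which is exactly what the Knill--Laflamme structure on the polynomial-in-\(\hat n\) error algebra guarantees.
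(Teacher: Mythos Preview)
Your proposal is correct and follows essentially the same route as the paper's own proof: fix a Knill--Laflamme recovery $\mathcal R$ for the span of $\{I,\hat n,\ldots,\hat n^D\}$, split $\mathcal N_\sigma$ into a tail $|\theta|>\phi$ (bounded directly by probability mass) and a central part $|\theta|\le\phi$, expand $e^{-i\theta\hat n}=R_{\theta,D}+\overline R_{\theta,D}$, use $\mathcal R(R_{\theta,D}|\psi\>\<\psi|R_{\theta,D}^\dagger)=(1-\|\overline R_{\theta,D}|\psi\>\|^2)|\psi\>\<\psi|$ together with contractivity of $\mathcal R$ on the remaining pieces, and integrate against $p(\theta)$. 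One small bookkeeping point to watch: the tail actually enters \emph{twice}, once as $\|\mathcal R((\mathcal N_\sigma-M_{\sigma,\phi})(|\psi\>\<\psi|))\|_1=2\int_\phi^\infty p(\theta)\,d\theta$ and once more because $\int_{-\phi}^\phi p(\theta)\,d\theta<1$ leaves a subnormalization defect of the same size in the corrected main term; together these give the $4\int_\phi^\infty p(\theta)\,d\theta$ in trace norm that becomes $2\int_\phi^\infty p(\theta)\,d\theta$ in $\epsilon_{\rm bin}$ after halving --- exactly the ``normalization drift'' you already flagged as the delicate step.
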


We numerically find upper bounds to $\epsilon_{\rm bin}$ by optimizing over $2 \le D \le 50$, and $\frac{\sigma}{2}< \phi <15\sigma$ when $g=9$.
We compare these upper bounds with our lower bound on $\epsilon_{g,\sigma}$ in Figure \ref{fig:binomial-code-comparison}.  
 
\section{Achievability bounds on multimode $g$-gapped codes}
 
 \begin{figure*}
 \centering  
\subfigure[$g=1$]{
\begin{minipage}[t]{0.32\linewidth}
\centering
\includegraphics[width=1\textwidth]{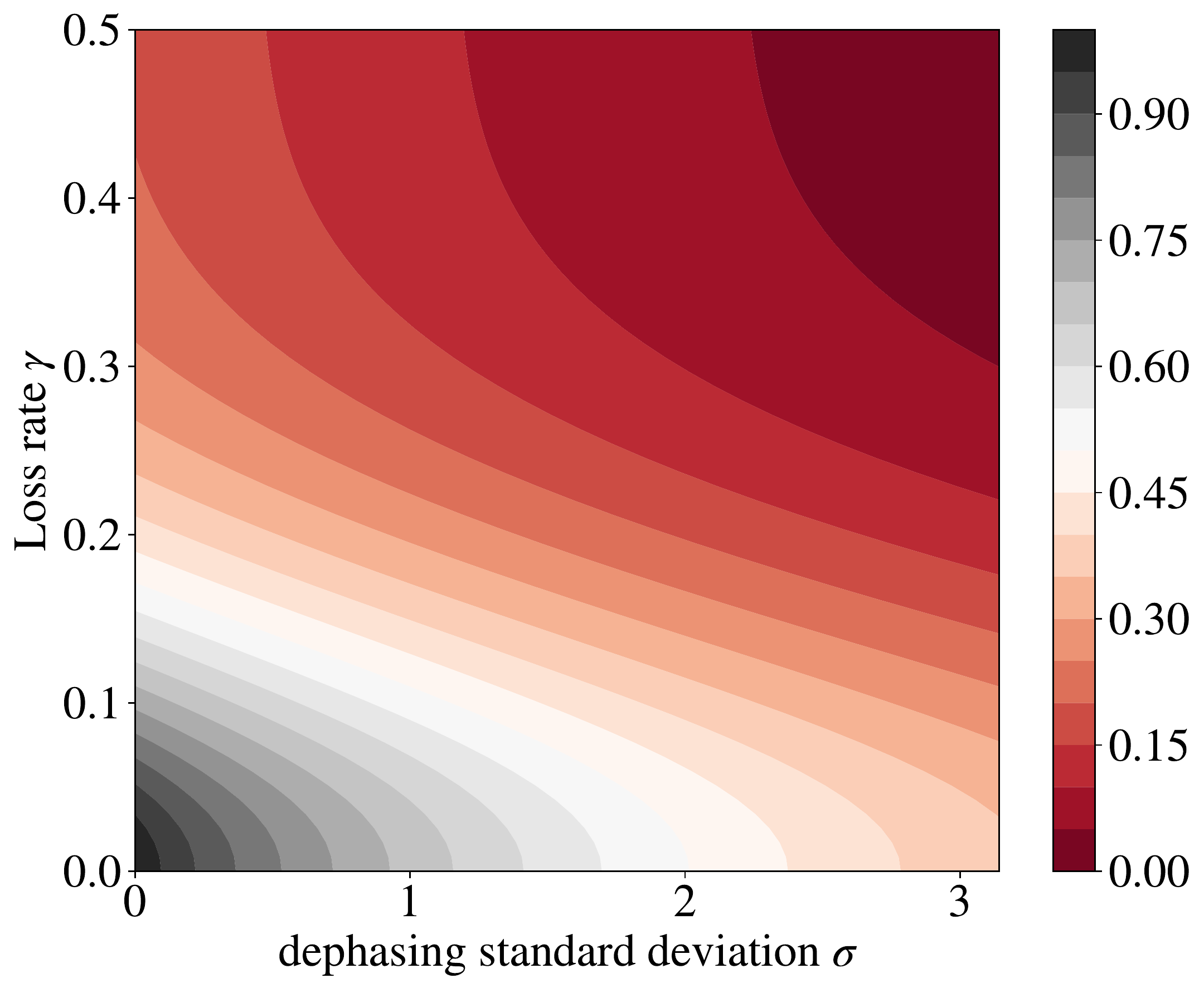} 
\end{minipage}%
}%
\subfigure[$g=10$]{
\begin{minipage}[t]{0.32\linewidth}
\centering
\includegraphics[width=1\textwidth]{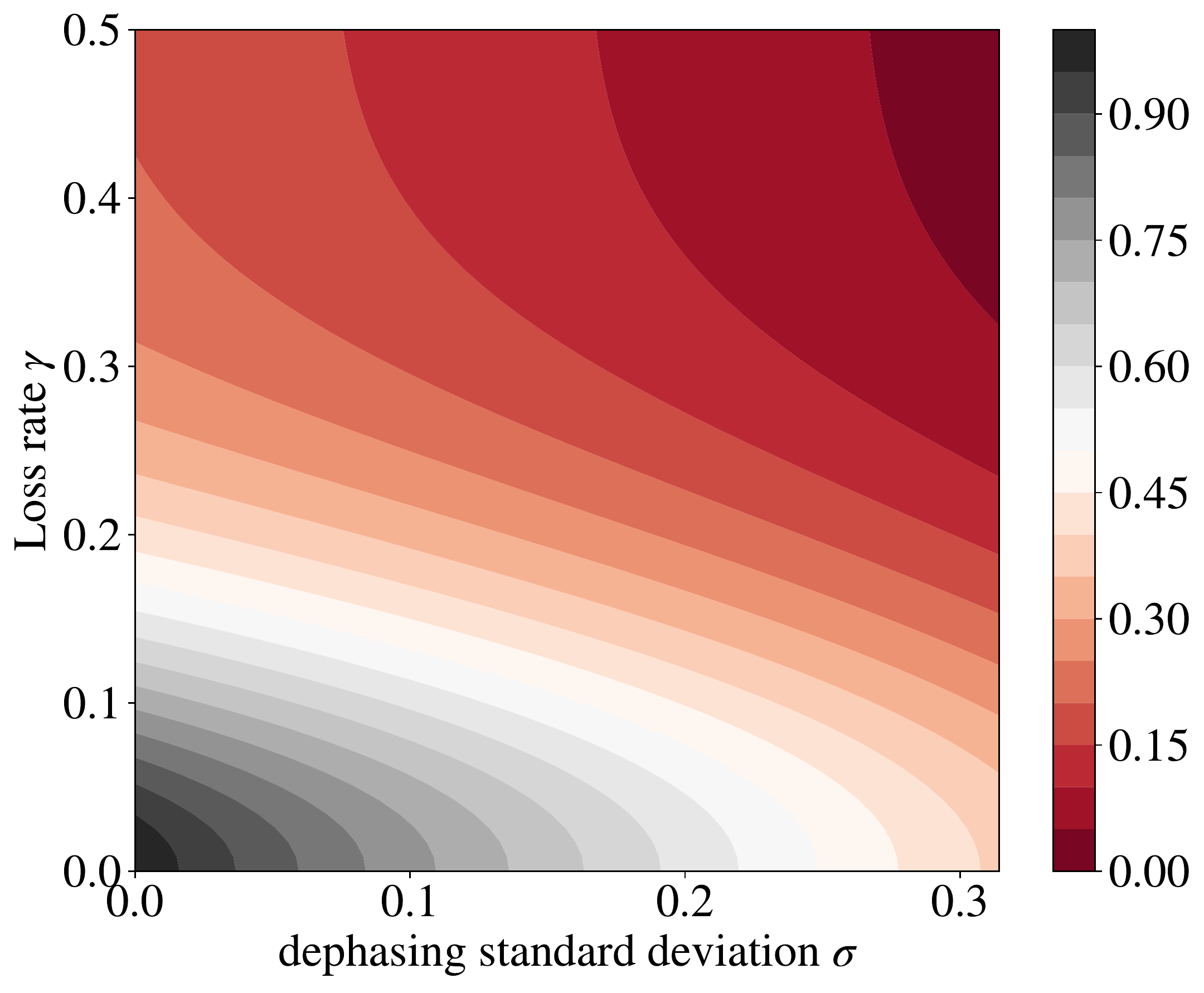} 
\end{minipage}%
}
\subfigure[$g=60$]{
\begin{minipage}[t]{0.32\linewidth}
\centering
\includegraphics[width=1\textwidth]{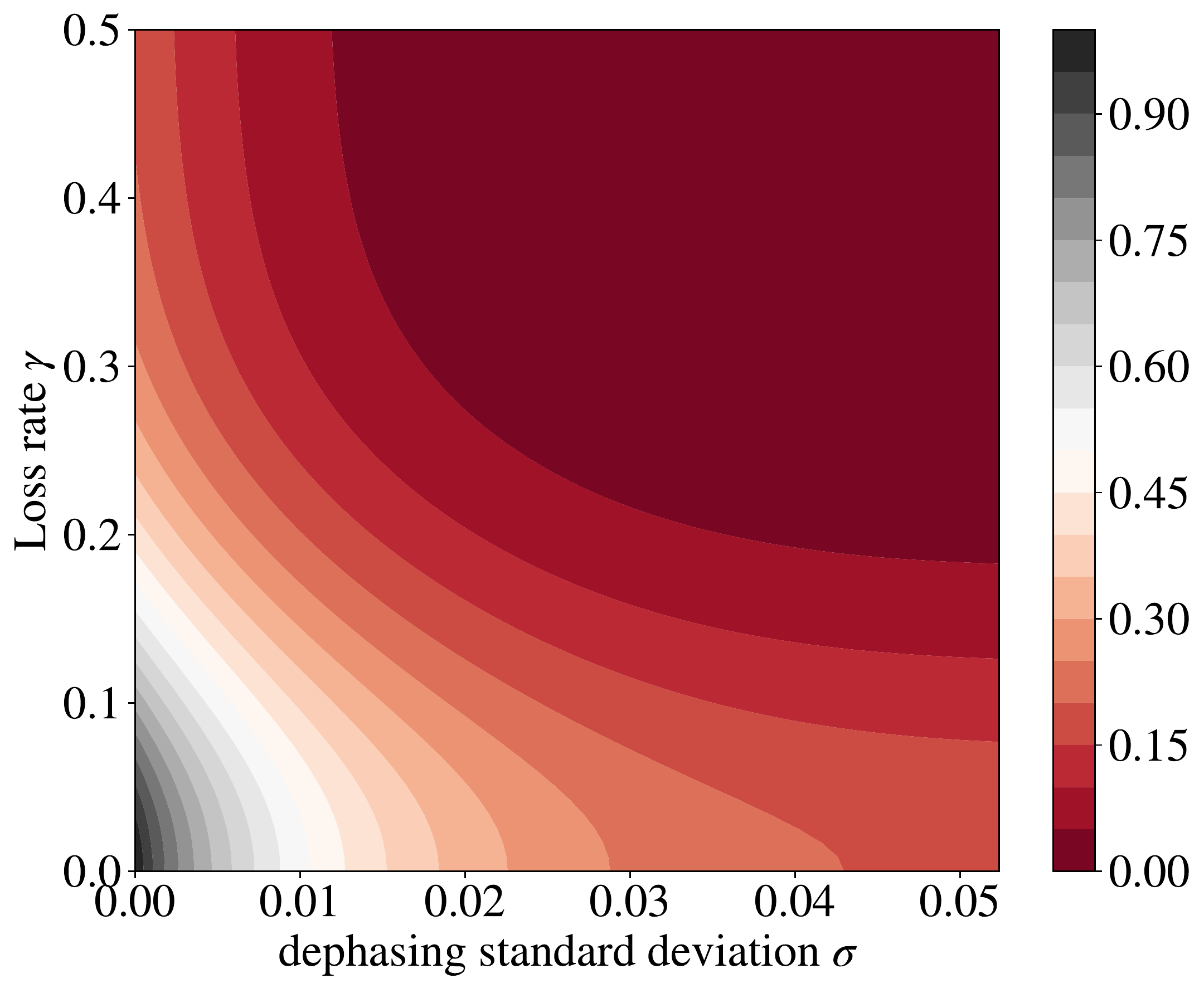} 
\end{minipage}%
}
 \caption{Lower bounds on the achievable coding rate using $g$-gapped codes when the noisy channel per bosonic mode is given by $\mathcal M = \lambda \mathcal E_\sigma + (1-\lambda) \mathcal A_\gamma$ where $\lambda=0.5$, and for different values of $g$. This shows that over a wide range of values for $\sigma$ and $\gamma$, the channel $\mathcal N$ has positive capacity with respect to $g$-gapped codes.}\label{fig:icoh}
 \end{figure*}

Evaluating the quantum capacity of an arbitrary quantum channel is difficult, and simple upper bounds for it are not necessarily tight \cite{pisarczyk2019causal}.
Evaluating the quantum capacity simplifies when the quantum channel is degradable \cite{cubitt2008structure}.  
Fortunately for us, $\mathcal E_\sigma$ is degradable, because it has simultaneously diagonal Kraus operators \cite{DeS03}.

However, consider simple multimode $g$-gapped codes that are effectively qubit codes. Let $\mathcal S$ be the span of $\{|0\>,|g\>\}$. Given $m$ modes, consider $g$-gapped codes in $\mathcal S^{\otimes m}$.
This reduces our analysis on $\mathcal E_\sigma$ to that on a simple single-qubit dephasing noise model of the form
\begin{align}
    \mathcal D( \rho ) = (1-p) \rho + p  Z \rho Z,\label{eq:effective-qubit-model}
\end{align}
where $\rho$ is supported on $\mathcal S$, $Z = |0\>\<0|  - |g\>\<g|$,
 and $\smash{1-2p = \exp(- g^2 \sigma^2 / 2)}$.
If $\sigma < \infty$, we have $p>1/2$.  
Now the channel $\mathcal{D}$ with $p>1/2$ has a non-zero quantum capacity $Q(\mathcal D)$  \cite{Rai99,DeS03,SSW08} because
\begin{align}
Q(\mathcal D) =  1+p \log_2 p + (1-p)\log_2(1-p) .
\end{align} 
For completeness, we explain how to evaluate $Q(\mathcal D)$ explicitly in Appendix B.
The quantum capacity $Q(\mathcal D)$ gives us lower bounds to the quantum capacity of $\mathcal E_\sigma$ when restricted to $g$-gapped codes.

When loss errors occur in addition to phase errors, we can model the noisy channel on each bosonic mode as 
\begin{align}
\mathcal M = \lambda \mathcal E_\sigma  + (1-\lambda) \mathcal A_\gamma,
\end{align}
where $\mathcal A_\gamma$ denote the bosonic amplitude damping channel with Kraus operators that removes $k$ excitations is given by 
\begin{align}
A_k = \sum_{m\ge k} \sqrt{\binom m k}
\sqrt{ (1-\gamma)^{m-k}\gamma	^k} |m-k\>\<m|.
\end{align} 
We show in Appendix C by applying the recovery map $\mathcal R$ with Kraus operators given by 
\begin{align}
R_j = \sum_{k=1}^\infty |kg\>\<kg-j|
\end{align}
where $j=0,1,\dots,g-1$ on the quantum channel $\mathcal M$, we obtain an effective qubit-dephasing channel on the space $\{|g\>,|2g\>\}$ with dephasing probability given by 
\begin{align}
r= \lambda p + (1-\lambda) q,
\end{align}
where
\begin{align}
q=\left(1-\sum_{k=0}^{g-1}\sqrt{\binom g k \binom {2g} k} (1-\gamma)^{3g/2-k}\gamma	^k\right)/2.
\end{align}
Hence it follows that 
\begin{align}
Q(\mathcal M) \ge  1+r \log_2 r + (1-r)\log_2(1-r) .
\end{align}
We plot in Fig.~\ref{fig:icoh} lower bounds on the achievable rates at which we can encode quantum information in $g$-gapped multi-mode codes under the channel $\mathcal M$ which models both dephasing and loss, and show that over a wide range of values for $\gamma$ and $\sigma$, this rate is positive.

Extending our analysis from qubit-dephasing channels to qudit dephasing channels can give correspondingly tighter lower bounds on the $g$-gapped quantum capacity of $\mathcal E_\sigma$.

\section{Discussions}
In summary, we have explored trade-offs on number and phase shift resilience in bosonic quantum codes. By fixing the parameter $g$ which quantifies the number-shift resilience of bosonic codes, we obtain corresponding no-go results on the correctibility of $g$-gapped errors against dephasing noise. Our results apply both to the simplest setting of a single-bosonic mode, and also multiple bosonic modes. Our work gives no-go results on what can be achieved using $g$-gapped bosonic codes both in a single-mode and a multi-mode setting.

There are several possible directions in which we believe that our work can be extended. 
Given that energy constraints in bosonic codes have received much attention in recent years \cite{holevo2004entanglement,shirokov2018energy,wilde2018energy},
it will be interesting to see how the additional introduction of energy constraints affects the trade-offs. Also, we leave the problem of tightening our no-go bounds for future work.

\section{Acknowledgements}
The authors are grateful to Robert Koenig, Lisa H\"anggli, Margret Heinze, and Barbara Terhal for fruitful discussions.
YO and EC acknowledge support from the EPSRC (Grant
No. EP/M024261/1) and the QCDA project (Grant No.
EP/R043825/1)) which has received funding from the QuantERA
ERANET Cofund in Quantum Technologies implemented
within the European Union’s Horizon 2020 Programme.  This work was completed while ETC was at the University of Sheffield.
Y.O. is supported in part by NUS startup grants (R-263-000-E32-133 and R-263-
000-E32-731), and the National Research Foundation, Prime Minister’s Office, Singapore and the Ministry of Education, Singapore under the Research Centres of Excellence programme.

\bibliography{bosonic-dephasing}{}
\bibliographystyle{ieeetr}

\appendix

\noindent{\bf A: Proof of Lemma 6}
 
 \begin{proof}
 Now for positive $\phi$, let 
\begin{align}
M_{\sigma, \phi}(\rho) 
= 
\int_{-\phi}^\phi p(\theta)
 e^{-i \theta \hat n}\rho e^{i \theta \hat n} d\theta .
\end{align}
It follows that 
\begin{align}
&\| \mathcal N_{\sigma}(|\psi\>\<\psi|)-M_{\sigma, \phi}(|\psi\>\<\psi|)  \|_1 \notag\\
=&\|(
\int_{-\infty}^{-\phi}
 p(\theta)
e^{-i \theta \hat n }
|\psi\>\<\psi| 
e^{ i \theta \hat n } d\theta \notag\\
&\quad 
+
\int_{\phi}^\infty
 p(\theta)
e^{-i \theta \hat n }
|\psi\>\<\psi| 
e^{ i \theta \hat n }  d\theta
)  \|_1 
 \notag\\
 \le &
2 \int_{\phi} ^\infty p(\theta) d\theta. \label{eq:tail-bound}
\end{align}
Now
\begin{align}
&M_{\sigma, \phi}(|\psi\>\<\psi|)  \notag\\
=&
\int_{-\phi}^\phi p(\theta)
 (R_{\theta,D}+\overline R_{\theta,D})
|\psi\>\<\psi| 
 (R_{\theta,D}^\dagger +\overline R_{\theta,D}^\dagger)
 d\theta
 \notag\\
=&
\int_{-\phi}^\phi p(\theta)
 R_{\theta,D}
|\psi\>\<\psi| 
 R_{\theta,D}^\dagger 
 d\theta
+
\int_{-\phi}^\phi p(\theta)
\overline R_{\theta,D}
|\psi\>\<\psi| 
\overline R_{\theta,D}^\dagger
 d\theta
 \notag\\
 &
+
\int_{-\phi}^\phi p(\theta)
 R_{\theta,D}
|\psi\>\<\psi| 
 \overline R_{\theta,D}^\dagger
 d\theta
 \notag\\
 &
+
\int_{-\phi}^\phi p(\theta)
 \overline R_{\theta,D}
|\psi\>\<\psi| 
 R_{\theta,D}^\dagger 
 d\theta.
 \end{align}
 Now let $\mathcal R$ be any quantum channel $\mathcal R$ that corrects $D$ phase errors.
From the trace preserving property of $\mathcal R$ and the perfect correctibility of the errors $R_{\theta,D}$, we have
 \begin{align}
 &\mathcal R \left(  R_{\theta,D} |\psi\>\<\psi|  R_{\theta,D}^\dagger \right)\notag\\
 =&
 |\psi\>\<\psi|  \<\psi| R_{\theta,D}^\dagger R_{\theta,D} |\psi \> 
 \notag\\
 =&
 |\psi\>\<\psi| \| R_{\theta,D} |\psi \>  \|^2  \notag\\
 =&
 |\psi\>\<\psi| \left(1 - \| \overline R_{\theta,D} |\psi \>  \|^2  \right).\label{eq:phase-correctibility}
 \end{align}
 Now by the triangle inequality, we have 
\begin{align}
 &\| \mathcal R(\mathcal N_\sigma (|\psi\>\<\psi|))-  |\psi\>\<\psi|  \|_1  \notag\\
\le&
\| \mathcal R( \mathcal M_{\sigma,\phi} (|\psi\>\<\psi|)) -  |\psi\>\<\psi| \|_1  \notag\\
&\quad  + 
 \| \mathcal R( (\mathcal N_\sigma  - \mathcal M_{\sigma,\phi}) (|\psi\>\<\psi|)) \|_1 .
 \end{align}
 Since $\mathcal R$ must be trace preserving and $(\mathcal N_\sigma  - \mathcal M_{\sigma,\phi})$ is completely positive, we have that 
$ \| \mathcal R( (\mathcal N_\sigma  - \mathcal M_{\sigma,\phi}) (|\psi\>\<\psi|)) \|_1  
=
 \| (\mathcal N_\sigma  - \mathcal M_{\sigma,\phi}) (|\psi\>\<\psi|) \|_1  $.
Hence using \eqref{eq:tail-bound} we find that   %
\begin{align}
 \| \mathcal R(\mathcal N_\sigma (|\psi\>\<\psi|)) - |\psi\>\<\psi| \|_1 
&\le
\| \mathcal R( \mathcal M_{\sigma,\phi} (|\psi\>\<\psi|)) -  |\psi\>\<\psi| \|_1  \notag\\
 &+ 2 \int_{\phi}^{\infty}p(\theta) d\theta .
 \end{align}
 Using \eqref{eq:phase-correctibility}, it follows that 
\begin{align}
& \| \mathcal R(\mathcal N_\sigma (|\psi\>\<\psi|)) - |\psi\>\<\psi| \|_1 \notag\\
\le&
\left\| \int_{-\phi}^\phi p(\theta)(1-\| \overline R_{\theta,D} |\psi\> \|^2) |\psi\>\<\psi| d\theta
-  |\psi\>\<\psi| \right\|_1 
\notag\\
&
+
\left\|\int_{-\phi}^\phi p(\theta) R_{\theta,D} |\psi\>\<\psi| \overline R_{\theta,D} ^\dagger d\theta  \right\|_1  \notag\\
&+
\left\|\int_{-\phi}^\phi p(\theta)\overline R_{\theta,D} |\psi\>\<\psi| R_{\theta,D} ^\dagger d\theta  \right\|_1 
 + 2 \int_{\phi}^{\infty}p(\theta) d\theta 
 \notag\\
\le&
\int_{-\phi}^\phi p(\theta) \left\| (1-\| \overline R_{\theta,D} |\psi\> \|^2) |\psi\>\<\psi| 
-  |\psi\>\<\psi| \right\|_1 d\theta
\notag\\
&
+
\int_{-\phi}^\phi p(\theta) \left\|R_{\theta,D} |\psi\>\<\psi| \overline R_{\theta,D} ^\dagger \right\|_1 d\theta  \notag\\
&+
\int_{-\phi}^\phi p(\theta)\left\|\overline R_{\theta,D} |\psi\>\<\psi| R_{\theta,D} ^\dagger \right\|_1 d\theta  
 + 4 \int_{\phi}^{\infty}p(\theta) d\theta 
 \notag\\
=&
\int_{-\phi}^\phi p(\theta) \| \overline R_{\theta,D} |\psi\> \|^2 d\theta
\notag\\
&+
\int_{-\phi}^\phi p(\theta) 
\tr \sqrt{  
\overline R_{\theta,D} |\psi\>\<\psi|  R_{\theta,D} ^\dagger
R_{\theta,D} |\psi\>\<\psi| \overline R_{\theta,D} ^\dagger  } d\theta  
\notag\\
&+
\int_{-\phi}^\phi p(\theta)
\tr \sqrt{  
 R_{\theta,D}|\psi\>\<\psi| \overline R_{\theta,D} ^\dagger 
\overline R_{\theta,D} |\psi\>\<\psi| R_{\theta,D} ^\dagger } 
d\theta   \notag\\
&
 + 4 \int_{\phi}^{\infty}p(\theta) d\theta .
 \end{align}
 Now it is clear that for any matrix $M$ we have that $\tr\sqrt{M|\psi\>\<\psi|M^\dagger} = \sqrt{\<\psi|M^\dagger M|\psi\>} = \| M | \psi\>  \|$. This is because $M|\psi\>\<\psi|M^\dagger$ is a rank one matrix, with eigenvector $M|\psi\>$ and eigenvalue $ \| M | \psi\>  \|$. Using this fact, we find that 
\begin{align}
 &\| \mathcal R(\mathcal N_\sigma (|\psi\>\<\psi|)) - |\psi\>\<\psi| \|_1 \notag\\ 
\le &
\int_{-\phi}^\phi p(\theta) \| \overline R_{\theta,D} |\psi\> \|^2 d\theta \notag\\
&+
2\int_{-\phi}^\phi p(\theta)
 \| \overline R_{\theta,D} |\psi\>\| \|  R_{\theta,D} |\psi\>\|
d\theta  \notag\\
&
 + 4 \int_{\phi}^{\infty}p(\theta) d\theta . 
 \end{align}
 Now $\| R_{\theta,D } |\psi\> \|  
 =  \|  e^{-i\theta \hat n} |\psi\> - \overline R_{\theta,D} |\psi\> \|
 \le 1 + \| \overline R_{\theta,D } |\psi\> \|  $. 
 It follows that
 \begin{align}
 &\| \mathcal R(\mathcal N_\sigma (|\psi\>\<\psi|)) - |\psi\>\<\psi| \|_1  \notag\\
\le&
\int_{-\phi}^\phi p(\theta) \left(  \| \overline R_{\theta,D} |\psi\> \|+ 2 \| \overline R_{\theta,D} |\psi\> \|^2 \right) d\theta 
 + 4 \int_{\phi}^{\infty}p(\theta) d\theta ,
 \end{align}
 and the result follows.
 \end{proof}

\noindent{\bf B The quantum capacity $Q(\mathcal D)$}

\noindent We remark that the quantum capacity of generalized dephasing channels is known to be its Rains information \cite[Proposition 10]{tomamichel2016strong}. 

Kraus operators of the dephasing channel $\mathcal D$ can be written as 
\begin{align}
    D_0 &= \sqrt{1-p}  |0\>\<0| + \sqrt{1-p}|g\>\<g|\\
    D_1 &= \sqrt{p}  |0\>\<0| - \sqrt{p}|g\>\<g|,
\end{align}
and the Kraus operators of the complementary channel $\hat{\mathcal D}$ are
\begin{align}
    R_0 &= \sqrt{1-p}  |0\>\<0|+ \sqrt{p}  |0\>\<0|\\
    R_1 &= \sqrt{1-p}|g\>\<g| - \sqrt{p}|g\>\<g|.
\end{align}
Now let us denote a diagonal state $\tau_r$ to be given by
\begin{align}
    \tau_r = (1- r) |0\>\<0| + r |g\>\<g| .
\end{align}
Denoting $S(\rho) = - \tr( \rho \log \rho)$ as the von-Neumann entropy, 
the coherent information of $\mathcal D$ is given by 
\begin{align}
&I_{\rm coh} (\mathcal D ,  \tau_r) 
=
    S( \mathcal D (\tau_r))  - 
    S(\hat { \mathcal D } (\tau_r )).
\end{align}
Because the Kraus operators of $\mathcal D$ are diagonal, the coherent information can be maximized using only diagonal input states \cite{cubitt2008structure}. 
The symmetry that allows this is the covariance of the dephasing channel with respect to diagonal operators, coupled with the concavity of the coherent information with respect to input states because of its degradability \cite[Corollary 4.3]{ouyang2014channel}. 
We will show that this coherent information is maximized by the maximally mixed state. 

Since $\mathcal D$ is degradable, we know that it is a concave function with respect to its argument which is a density matrix. 
Therefore,
$I_{\rm coh} (\mathcal D )  = \max_\rho I_{\rm coh} (\mathcal D,  \rho) $ is optimized whenever 
\begin{align}
    \frac{ d}{dr } I_{\rm coh} (\mathcal S , \tau_r  ) = 0
    \label{eq:zero-gradient}
\end{align}
and $r$ is contained in the open interval $(0,1)$.
Indeed, we can verify that \eqref{eq:zero-gradient} holds for $r=1/2$. Hence the maximally mixed state on $\mathcal S$ maximizes the coherent information.

Next we find that for all $0\le p\le 1/2$, we have
\begin{align}
&I_{\rm coh} (\mathcal D ,  \tau_{1/2})
=
    \log(2-2p)-2p \tanh^{-1}(1-2p),
    \label{eq:capacity-lower-bound}
\end{align}
where
\begin{align}
    \tanh^{-1}(z) = \frac{1}{2}\left(
    \log(1+z) - \log(1-z)
    \right).
\end{align}
Simplifying \eqref{eq:capacity-lower-bound},
we get 
\begin{align}
I_{\rm coh} (\mathcal D ,  \tau_{1/2})
= \left(
1 + p \log p + (1-p)\log(1-p)
\right)\log 2,
\end{align}
which is in agreement with \cite{Rai99,SSW08} when expressed in nats.
\newline
  
\noindent{\bf C: Reduction to an effective qubit-dephasing channel}
\newline
\noindent Here we reduce the channel $\mathcal M$ to an effective qubit-dephasing channel by constraining the bosonic states to be supported on only $|g\>$ and $|2g\>$ on each bosonic mode. 

First note that the recovery operation $\mathcal R $ is indeed a quantum channel because
\begin{align}
R_j^\dagger R_j 
&= 
\sum_{j=0}^{g-1} \sum_{k,k'=1}^\infty |kg-j\> \<kg| k'g\>\<k'g-j|\notag\\
&=
\sum_{j=0}^{g-1} \sum_{k=1}^\infty |kg-j\> \<kg-j|
={\bf 1}.
\end{align}
Second, we compose $\mathcal R$ with $\mathcal M$ to obtain $\mathcal R \circ \mathcal M$. Clearly $\mathcal R \circ \mathcal D = \mathcal D$. Hence it remains to evaluate $\mathcal R \circ \mathcal A_\gamma$ on our allowed codespace. 
Consider the projector
\begin{align}
\Pi = \sum_{a \ge 0}|ag\>\<ag| .
\end{align}
Then we can see that  
\begin{align}
R_j A_k \Pi
&= 
\delta_{j,k}
\sum_{a=1}^\infty |ag\>\<ag| \sqrt{\binom {ag} k} \sqrt{ (1-\gamma)^{ag-k}\gamma	^k} .
\end{align}
This implies that the effective amplitude damping Kraus operators on the space spanned by 
$\{|g\>,|2g\>\}$ are given by
\begin{align}
\bar R_k =& 
|g\>\<g|  \sqrt{\binom { g} k} \sqrt{ (1-\gamma)^{ g-k}\gamma	^k} \notag\\
&
+
|2g\>\<2g|  \sqrt{\binom {2g} k} \sqrt{ (1-\gamma)^{2g-k}\gamma	^k} .
\end{align}
Third, to see how $\bar R_0 ,\dots, \bar R_{g-1}$ model an effective dephasing channel, 
note that 
\begin{align}
\sum_{k=0}^{g-1} \bar R_k  (|g\>\<g|)   \bar R_k^\dagger
&=|g\>\<g|\\
\sum_{k=0}^{g-1} \bar R_k  (|2g\>\<2g|)   \bar R_k^\dagger
&=|g\>\<g|\\
\sum_{k=0}^{g-1} \bar R_k  (|g\>\<2g|)   \bar R_k^\dagger
&=|g\>\<2g| \xi
\\
\sum_{k=0}^{g-1} \bar R_k  (|2g\>\<g|)   \bar R_k^\dagger 
&=|2g\>\<g| \xi
,
\end{align}
where 
\begin{align}
\xi = \sum_{k=0}^{g-1} \sqrt{\binom g k \binom {2g} k} (1-\gamma)^{3g/2-k}\gamma	^k,
\end{align}
and hence the channel modeled by $\overline{ \mathcal R}$ with Kraus operators $\bar R_0,\dots, \bar R_{g-1}$ is an effective dephasing channel with dephasing probability $q$ where
\begin{align}
1-2q = \sum_{k=0}^{g-1}\sqrt{\binom g k \binom {2g} k} (1-\gamma)^{3g/2-k}\gamma	^k.
\end{align}
Therefore we can see that 
\begin{align}
\lambda \mathcal D + (1-\lambda) \overline{\mathcal R}
\end{align}
has dephasing probability $r$ where
\begin{align}
1-2r = \lambda(1-2p)+(1-\lambda)(1-2q).
\end{align}

\end{document}